\title{The Optimal Way to Play the Most Difficult \\\ Repeated Coordination Games}
\author{Antti Kuusisto
\institute{University of Helsinki and Tampere
University\\ Finland}
\email{antti.kuusisto@helsinki.fi}
\and
Raine R\"{o}nnholm 
\institute{Université Paris-Saclay, CNRS, ENS Paris-Saclay\\
France}
\email{raine.ronnholm@tuni.fi}
}
\theoremstyle{plain}
\newtheorem{theorem}{Theorem}[section]
\newtheorem{lemma}[theorem]{Lemma}
\newtheorem{proposition}[theorem]{Proposition}
\theoremstyle{definition}
\newtheorem{definition}[theorem]{Definition}
\newtheorem{remark}[theorem]{Remark}
\newtheorem{example}[theorem]{Example}
\newcommand{\aamasp}[1]{{\color{magenta} #1}}
\newcommand{\techrep}[1]{{\color{blue} #1}}
\newcommand{\vcut}[1]{}
\newcommand{\anote}[1]{{\footnotesize \color{blue}\bf{}{AK: #1}}}
\newcommand{\rnote}[1]{{\footnotesize \color{green}\bf{}{RR: #1}}}
\newcommand{\vnote}[1]{}
\newcommand{\Logicname}[1]{\ensuremath{\mathsf{#1}}}
\newcommand{\WLC}{\Logicname{WLC}\xspace}
\newcommand{\CM}{\Logicname{CM}\xspace}
\newcommand{\WM}{\Logicname{WM}\xspace}
\newcommand{\LA}{\Logicname{LA}\xspace}
\newcommand{\ECT}{\Logicname{ECT}\xspace}
\newcommand{\GCT}{\Logicname{GCT}\xspace}
\newcommand{\OSCP}{\Logicname{OSCP}\xspace}
\newcommand{\defstyle}{\textbf}
\begin{document}
\maketitle

\begin{abstract}
This paper investigates repeated win-lose
coordination games (\WLC-games). We analyse which
protocols are optimal for these games,
covering both the worst case and
average case scenarios, i,e., optimizing the
guaranteed and expected coordination times. We begin by
analysing Choice Matching Games (\CM-games) which are a
simple yet fundamental type of \WLC-games, where the
goal of the players is to pick the same choice from a
finite set of initially indistinguishable choices. We
give a fully complete classification of optimal expected and
guaranteed coordination times in two-player \CM-games
and show that the corresponding optimal protocols are
unique in every case---except in the \CM-game
with four choices, which we analyse separately. 

Our results on \CM-games are essential for
proving a more general result on the
difficulty of all \WLC-games: 
we provide a
complete analysis of least upper bounds for
optimal expected coordination times in all
two-player \WLC-games as a function of 
game size. We also show that \CM-games can be
seen as the most difficult games
among all two-player \WLC-games, as they turn out to 
have the greatest optimal expected coordination times.
\end{abstract}

\section{Introduction}
\label{sect:intro}
Pure win-lose coordination games (\WLC-games) are 
simple yet fundamental games  
where all players receive the same payoffs: 1 (win) or 0 (lose).
This paper studies \emph{repeated} \WLC-games, where the players make
simultaneous choices in discrete rounds
until (if ever) succeeding to coordinate on a winning profile.
\emph{Choice matching games} (\CM-games) are the simplest 
class of such games. The choice matching game $\CM_m^n$ has
$n$ players with the goal to choose the same choice 
among $m$ different indistinguishable choices, with no
communication during play. The players 
can use the history of the game (i.e.,
the players' choices in previous rounds) for their benefit as the game proceeds.
For simplicity, we denote the
two-player game $\CM_m^2$\, \nolinebreak by\, \nolinebreak $\CM_m$.
%
%
%
%
%

A paradigmatic real-life scenario with a choice matching game
relates to a phenomenon that has humorously been called
``pavement tango'' or ``droitwich'' in \cite{droit}. Here two people try to 
pass each other but may end up 
blocking each other by repeatedly 
moving sideways into the same direction.
For another example of a choice matching game, consider $\CM_3$, 
the coordination-based 
variant of the rock-paper-scissors game, pictured on the right.
\begin{vwcol}[widths={0.85,0.15}, sep=2mm, justify=flush, rule=0pt, lines=7] 
\indent
Here the
two players (i.e., columns) coordinate if they succeed 
choosing an edge from one of the three rows.
The players first choose randomly; suppose they 
select the nodes in dotted circles. 
%
%
Next the players have three options: (1) to repeat their first choice, (2) to try to coordinate with the first choice of the other player, or (3) to select the choice which has not been selected yet.
Supposing that the players behave symmetrically, (3) is the only way to guarantee coordination in the second round.
%

\begin{tikzpicture}[scale=0.55,choice/.style={draw, circle, fill=black!100, inner sep=2.1pt},
	location1/.style={draw, very thick, densely dotted, circle, color=black!77, inner sep=3.2pt},
	location2/.style={draw, thick, circle, color=black!77, inner sep=3.2pt}]
	\node at (0,0) [choice] (00) {};
	\node at (2,0) [choice] (20) {};
	\node at (0,1) [choice] (01) {};	
	\node at (2,1) [choice] (21) {};
	\node at (0,2) [choice] (02) {};	
	\node at (2,2) [choice] (22) {};
	\node at (02) [location1] {};
	\node at (21) [location1] {};
	\node at (00) [location2] {};
	\node at (20) [location2] {};
	\draw[thick] (00) to (20);
	\draw[thick] (01) to (21);
	\draw[thick] (02) to (22);	
%
%
%
	\node at (0,3) {};	
\end{tikzpicture}
%
\end{vwcol}

A general $n$-player \WLC-game is a generalization of $\CM_m^n$ where the
players do not necessarily have to choose from the same row to coordinate,
and it may not even suffice to choose from the same row. In classical 
matrix form representation, two-player choice matching games have ones on
the diagonal and zeroes elsewhere, while general two-player \WLC-games have general
distributions of ones and zeroes;
see Definition~\ref{definition: games} for the full formal details.

In repeated \WLC-games, it is natural to try to coordinate as quickly as possible. 
There are two main scenarios to be investigated: \emph{guaranteeing
coordination} (with certainty) in as few rounds as possible and  
minimizing the \emph{expected number of rounds} for coordination.
The former concerns the number of
rounds it takes to coordinate in the \emph{worst case} and is measured in 
terms of \emph{guaranteed coordination times} ({\GCT}s). The latter
relates to the \emph{average case} analysis measured in
terms of \emph{expected coordination times} ({\ECT}s).


\medskip

\medskip

\noindent
\textbf{Our contributions.} We provide a comprehensive study of upper
bounds for coordination in \emph{all} two-player
repeated \WLC-games, including a classification of related
optimal strategies (called \emph{protocols} in this work).
\WLC-games are represented in a novel way as \emph{relational structures}, which is a key to many of the techniques used in the paper.
\CM-games are central to our work,
being a fundamental class of games and also the 
most difficult games for coordination---in a sense made precise below.

Two protocols play a central role in our study.
We introduce the so-called \emph{loop avoidance} 
protocol \LA (cf. Def. \ref{def: RES}) that essentially instructs players to
play so that the generated history of choices 
always reduces the symmetries (e.g., automorphisms) of the game structure. 
We also use the so-called \emph{wait-or-move}
protocol \WM (cf. Def. \ref{definition: waitormove}),
essentially telling players to randomly
alternate between two choices that both 
coordinate with at least one of the opponent's two choices. We show
that \WM leads to coordination in
\emph{all} two-player \WLC-games \emph{very fast},
the \ECT being at most $3-2p$,
where $p$ is the probability of
coordinating in the first round with random choices.

We then provide a complete analysis of the optimal {\ECT}s and {\GCT}s in all
choice matching games $\CM_m$. We also identify the 
protocols giving the
optimal {\ECT}s and {\GCT}s and show their
\emph{uniqueness}, where possible.
The table in Figure~\ref{fig: summary table} on the next page summarizes these results.
%
%
%
%
%
%
%
%
%
\setlength{\belowcaptionskip}{-15pt}
\begin{figure}
\begin{center}
\scalebox{0.9}{
{\footnotesize
\begin{tabular}{|c||c|c||c|c|}
 \hline
& Optimal \textbf{expected} & Unique optimal & Optimal \textbf{guaranteed} & Unique optimal \\
$m$ & coordination & protocol for & coordination & protocol for \\
& time in $\CM_m$ & expected time & time in $\CM_m$ & guaranteed time \\
\hline \rule{0pt}{1\normalbaselineskip} 
$1$\; & 1 & (any) & 1 & (any) \\[1mm]
$2$ & $2$ & \WM & $\infty$ & --- \\[1mm]
$3$ & $1+\frac{2}{3}$ & \LA & $2$ & \LA \\[1mm]
$4$ & $2+\frac{1}{2}$ & --- & $\infty$ & --- \\[1mm]
$5$ & $2+\frac{1}{3}$ & \LA & $3$ & \LA \\[1mm]
\hdashline \rule{0pt}{1\normalbaselineskip} 
$6$\; & $2+\frac{2}{3}$ & \WM & $\infty$ & --- \\[1mm]
$7$ & $2+\frac{5}{7}$ & \WM & $4$ & \LA \\
\vdots & \vdots & \vdots & \vdots & \vdots \\
$2k$ & $3 - \frac{1}{k}$ & \WM & $\infty$ & --- \\[1mm]
$2k+1$ & $3 - \frac{2}{2k+1}$ & \WM & $k$ & \LA \\[1mm]
\hline
\end{tabular}
}
}
\caption{A complete analysis of two-player choice matching games.}
\label{fig: summary table}
\end{center}
\end{figure}
%
%
%
%
%
%
%
%
%
%
%
%
%
%
%
Our analysis is fully \emph{complete}, as we prove that there 
exists a continuum of optimal protocols for $\CM_4$ and establish that
for all even $m$, no protocol \emph{guarantees} a win in $\CM_m$.

Concerning the more general class of all \WLC-games, we
provide the following complete characterization of
upper bounds for the optimal {\ECT}s in
all two-player $\WLC$-games 
\emph{as a function of game size} 
(a game in a classical matrix form is of size $m$ when the maximum of the
number of rows and columns is $m$):


\medskip

\medskip

\noindent
\textbf{Theorem.}
\emph{For any $m$, the greatest optimal \ECT among 
two-player $\WLC$-games of size $m$ is as follows:}
\begin{center}
\scalebox{0.8}{
\begin{tabular}{|c|c|c|c|}
\hline Game size
& $m\in\mathbb{Z}_+\setminus\{3,5\}$ & \quad$m=5$\quad\text{} & $m=3$ \\
\hline \rule{0pt}{1.2\normalbaselineskip} 
Greatest optimal\; \ECT & $3-\frac{2}{m}$ & $2+\frac{1}{3}$ & $\frac{1+\sqrt{4+\sqrt{17}}}{2}$ $(\approx 1.925)$ \\[1mm]
\hline
\end{tabular}
}
\end{center}

\medskip

Also, concerning two-player \CM-games,
we establish that $\CM_m$ has the
strictly greatest optimal \ECT out of all two-player $\WLC$-games of 
size $m\not = 3$, making \CM-games the most difficult \WLC-games to 
coordinating in. We give a separate full analysis of the case $m=3$.

Two comments are in order. Firstly, in real-life scenarios it can be very inefficient to determine the absolutely optimal protocols taking into account the full game structure. Indeed, this generally requires an analysis of the game structure and thus identifying, e.g., automorphic choices. Computing automorphisms of (even bipartite graphs) is well known to be hard. However, our analysis gives an instant way of finding a fast protocol for any two-player \WLC-game $G$. The \ECT given by this protocol in $G$ is better or equal to the greatest optimal \ECT of the games of the same size (cf. the theorem above).
%
Secondly, our main analysis concerns only two-player games. However, the arguments for this case are already quite involved, so the $n$-player case is expected to be highly complex and is thus left for the future. Furthermore, the two-player case is an especially important special case covering, e.g., learning of communication protocols in distributed systems.

\medskip

\medskip

\noindent
\textbf{Related work.}
Coordination games (see, e.g., \cite{russellcooper},
\cite{Biglaiser1994}) 
are a key topic in game theory, 
with the early foundations laid, inter alia, in the works of
Schelling \cite{schelling} and Lewis \cite{Lewis69}.
Repeated games are---likewise---a key
topic, see for example \cite{mertens}, \cite{AumannMaschler95}, \cite{MailathSamuelson06}.
For seminal work on \emph{repeated coordination games},
see for example the articles \cite{CrawfordHaller95}, \cite{crawfordadaptive},
\cite{lagunoffmatsui}. Repeated coordination games have a
wide range of applications, from 
learning and social choice theory to symmetry breaking in distributed systems.

%
%

However, \WLC-games are a simple class of games
that have not
been extensively studied in the literature. In particular,
choice matching games clearly
constitute a \emph{highly fundamental class of games},
and it is thus surprising
that the analysis of the current paper has not
been previously carried out.
Thus the related analysis is well
justified; it \emph{closes a gap} in the literature.
Indeed, while \CM-games are simple, they precisely capture the highly fundamental coordination problem of \emph{picking the same choice from a set of choices}.

Our study differs from the classical game-theoretic 
work on repeated games where
the focus is on
accumulated payoffs instead of the discrete
outcomes of \WLC-games. Indeed,
repeated \WLC-games are based on 
\emph{reachability objectives}.
Especially our
worst case (but also the average case)  analysis has only superficial overlap with
typical work on repeated games.

However, similar work exists, the most notable example being
the seminal article \cite{CrawfordHaller95} that studies a
generalization of \WLC-games in a framework that has some similarities with our setting. Also the papers \cite{lori2017}, \cite{lorijournal}, \cite{eumas}, \cite{ecai2020} have a
similar focus, as they also 
concentrate on coordination games with discrete win-lose outcomes.
However, the papers \cite{lori2017}, \cite{lorijournal}, \cite{eumas}, \cite{ecai2020} do not investigate optimality of 
protocols in repeated games, unlike the current paper.
The seminal work \cite{CrawfordHaller95} introduces (what is 
equivalent to) the two-player \CM-games in their
final section on general examples.
They also essentially identify the optimal ways of playing $\CM_2$ and $\CM_3$ 
(discussed also in this article) although in a technically
somewhat different setting with
accumulated payoffs. Furthermore, 
they observe that a protocol essentially equivalent to $\WM$ is the best
way to play $\CM_6$, an observation we also make in our
setting. However, optimality of $\WM$ in $\CM_6$ is
not proved in \cite{CrawfordHaller95}. This would require an 
extensive analysis proving that the players cannot make beneficial 
use of asymmetric histories created by non-coordinating choices.
Indeed, the main technical difficulty in our corresponding
setting is to show \emph{uniqueness} of the optimal protocol, which we do in each case where a unique protocol exist.

Nonetheless, despite the differences, the framework of \cite{CrawfordHaller95}
bears some conceptual similarities to ours, e.g., the authors also 
identify \emph{structural protocols} (cf. Definition 
\ref{structuralprotocoldfn} below) as
the natural notion of strategy for studying their framework.
Furthermore, they make extensive use of focal points \cite{schelling} in 
analysing how asymmetric histories can
be used for coordination.
%

Relating to uniqueness of protocols, 
\cite{Goyal96} argues that individual rationality considerations not are not sufficient for ``learning how to coordinate'' in the setting of \cite{CrawfordHaller95}. We agree with \cite{Goyal96} that some
conventions are needed if many protocols lead to the optimal result.
However---in our framework---as we can prove
\emph{uniqueness} of the optimal protocols for $\CM_m$ (for $m\neq 4$), then arguably rational players should adopt precisely these
protocols in \CM-games.

The paper \cite{arxivopt222} is the preprint of the current submission with full proofs and further examples.


\medskip

\medskip

\noindent
\textbf{Techniques used.}
The core of our work
relies on an original approach to
games based on relational structures, as
opposed to using the traditional matrix form representation.
This approach enables us to use graph-theoretic ideas in our arguments.
Both in the worst-case and in average-case analysis, the main 
technical work relies heavily on analysis of symmetries---especially
the way the groups of
automorphisms of games evolve
when playing coordination games.
The most involved result of the worst-case analysis,
Theorem \ref{the: odd m}, is 
proved by reducing the cardinality of the
automorphism group of the \WLC-game studied in a maximally fast fashion.
In the average-case analysis, Theorems \ref{the: 6-ladders},
\ref{the: 5-ladders}, \ref{the: 4-ladders} are proved via a 
combination of analysis of extrema; keeping track of 
groups of automorphisms; graph-theoretic methods; and
focal points \cite{schelling} for breaking symmetry.
The most demanding part here is to show
\emph{uniqueness} of the protocols involved.
Also in the average-case analysis,
Theorem \ref{the: upper bounds for ECTs} relies on earlier theorems and an
extensive and exhaustive analysis of certain bipartite graphs.

In principle one could of course reduce our arguments into the setting with games presented in matrix form. However, then it would become harder to apply
natural graph-theoretic notions like degrees of nodes; special features such as cycles; general symmetries (automorphisms) et cetera. Such notions are key elements in our analysis.

\vcut{
\vnote{Much repetition with the ECAI paper. Most of these can be omitted here.} 
Here we briefly mention some of the
most notable references relevant to our study.
%
The classics of Schelling \cite{schelling} and Lewis \cite{Lewis69}
lay much of the foundations of coordination games and
demonstrate the importance of focal points, salience, and conventions. 
The follow-up study on coordination by Gauthier \cite{gauthier1975coordination} is also important, as is Sugden's survey on rational choice  \cite{SugdenRationalChoice91} and his paper \cite{sugden1995theory} on focal points.
Especially focal points,
and more indirectly conventions, play a central role in our study, too.
Also the work \cite{CrawfordHaller95} on repeated coordination games is highly 
relevant to our study, especially as it emphasises the importance of symmetries in coordination games.
Essential general references are the book \cite{HarsanyiSelten88} on the theory of rational choice in selecting equilibria and the books on repeated games \cite{AumannMaschler95}, \cite{MailathSamuelson06}.
Other relevant references we 
wish to mention here include, e.g., \cite{Sugden89}, \cite{gilbert1990rationality}, \cite{Bicchieri95}. 
}

%
%
%
%


\vcut{FROM THE LORI PAPER: 
We note the close conceptual relationship of the present study with the notion of \emph{rationalisability} of strategies \cite{Bernheim84}, \cite{FudenbergTirole91}, \cite{Pearce84}, which is particularly important in epistemic game theory. 
}

\section{Preliminaries}
\label{sec:prelim}

We define (pure) win-lose coordination games as \emph{relational structures} as follows.

%

\begin{definition}\label{definition: games}
An $n$-player \defstyle{win-lose coordination game (\WLC-game)}
is a relational structure $G=(A,C_1,\dots,C_n,W_G)$ where $A$ is a finite
domain of \defstyle{choices}, each $C_i$ is a non-empty unary relation (representing the choices of player $i$) such that $C_1\cup\cdots\cup C_n=A$,  
and $W_G\subseteq C_1\times\cdots\times C_n$ is an $n$-ary \defstyle{winning relation}.  
For technical convenience, we assume the players have pairwise disjoint choice sets, i.e., $C_i\cap C_j=\emptyset \text{ for all } i \text{ and }j\not=i$.
%
A tuple $\sigma\in C_1\times\cdots\times C_n$ is a \defstyle{choice profile} for $G$
and the choice profiles in $W_G$ are \defstyle{winning (choice) profiles}.
We assume there are no surely losing choices, i.e., choices $c\in A$ that do not belong to any winning choice profile, as rational players would never select such choices. 
%
\end{definition}

We will use the visual representation of \WLC
games as hypergraphs;
two-player games 
become just bipartite graphs under this scheme.
The choices of each player are displayed as
columns of nodes, starting from the choices of
player 1 on the left and ending with the column of choices of player $n$. The
winning relation consists of lines that represent the winning choice profiles. Thus winning choice profiles are also
called \textbf{edges}.
See Example A.1 in \cite{arxivopt222}
for an illustration.



Consider a \WLC-game $G=(A,C_1,\dots,C_n,W_G)$
with $n$ players and $m$ winning 
choice profiles that do not intersect, i.e., none of the $m$ winning choice
profiles share a choice $c\in A$.
Such games form a simple yet 
fundamental class of games, where the goal of
the players is simply to pick the same ``choice'', i.e., to 
simultaneously pick one of the $m$ winning profiles.
These games are called \defstyle{choice matching games}. We let $\CM^n_m$
denote the choice matching game with $n$ players and $m$
choices for each player.
%
%
In this article, we extensively make use of the two-player
choice matching games, $\CM^2_m$.
For these games, we will omit the
superscript ``$2$'' and simply denote them by $\CM_m$.
(Recall the example $\CM_3$
pictured in the introduction.)

Interestingly, out of
all $n$-player \WLC-games where
each of the $n$ players has $m$ choices, the game $\CM^n_m$ has the
least probability of coordination when each player plays randomly. 
In this sense these games can be seen the most difficult for coordination. 
A fully compelling reason for the maximal difficulty
of choice matching games is given later on by Theorem~\ref{hardest games}.

\section{Repeated \WLC-Games}
\label{sect:RWLC}

A \defstyle{repeated play of a \WLC-game $G$} 
consists of consecutive (one-step) plays of $G$.
The repeated play is continued until the
players successfully coordinate, i.e.,
select their choices from a winning choice profile. This
may lead to infinite plays.
We assume that each player can
remember the full history of the
repeated play and use
this information when planning
choices. The history of the play after $k$ rounds is
encoded in a sequence $\mathcal{H}_k$
defined as follows.

\begin{definition}\label{definition: stages}
Let $G$ be an $n$-player \WLC-game.
A pair $(G,\mathcal{H}_k)$ is
called a \defstyle{stage $k$ (or $k$th stage)
in a repeated play of $G$},
where the \defstyle{history} $\mathcal{H}_k$ is a $k$-sequence of
choice profiles in~$G$.
More precisely, $\mathcal{H}_k=\big(H_i\big)_{i\in\{1,\dots,k\}}$
where each $H_i$ is an $n$-ary
relation $H_i=\{(c_1,\dots,c_n)\}$ with a 
single tuple $(c_1,\dots,c_n)\in C_1\times\cdots\times C_n$.
In the case $k=0$, we define $\mathcal{H}_0=\emptyset$. The
stage $(G,\mathcal{H}_0)$ is
the \defstyle{initial stage} (or the $0$th stage).
Like $G$, also $(G,\mathcal{H}_k)$ is a relational structure.
\end{definition}


A stage $k$ contains a history specifying precisely $k$ choice profiles chosen in a repeated play. 
%
%
A winning profile of $(G,\mathcal{H}_k)$ is
called a \defstyle{touched edge} if it
contains some choice $c$ picked in some
round $1,\dots , k$ leading to $(G,\mathcal{H}_k)$.
%
%
%
As we assume that the players only need to coordinate once, we consider repeated plays only up to the first stage where some winning choice profile is selected. If
coordination occurs in the $k$th round, the $k$th stage is called the \defstyle{final stage} of the repeated play.
(But a play can
take infinitely long without coordination.)

\begin{vwcol}[widths={0.85,0.15}, sep=5mm, justify=flush, rule=0pt,lines=5] 
\indent
On the right is a drawing of the stage $2$ in a repeated play of $\CM_2$,
the ``coordination game variant'' of the \emph{matching pennies game} 
(or the ``pavement tango'' from the introduction). 
Here the players have failed to coordinate in round 1 (having picked the choices
with dotted circles) and then failed again by both swapping their choices in
round 2 (solid circles).

\begin{tikzpicture}[scale=0.6,choice/.style={draw, circle, fill=black!100, inner sep=2.1pt},
	location1/.style={draw, very thick, densely dotted, circle, color=black!77, inner sep=3.2pt},
	location2/.style={draw, thick, circle, color=black!77, inner sep=3.2pt}]
	\node at (0,0) [choice] (00) {};
	\node at (2,0) [choice] (20) {};
	\node at (0,1) [choice] (01) {};	
	\node at (2,1) [choice] (21) {};
	\node at (00) [location1] {};
	\node at (21) [location1] {};
	\node at (01) [location2] {};
	\node at (20) [location2] {};
	\draw[thick] (01) to (21);
	\draw[thick] (00) to (20);	
%
%
    \node at (-0.3,1.7) {};
\end{tikzpicture}
\end{vwcol}

%

%
\scalebox{0.95}{A protocol gives a
\emph{mixed strategy for all stages in all \WLC-games and
for all player roles $i$}:} 

\begin{definition}\label{def: Protocols}
A \defstyle{protocol} $\pi$ is a function outputting a probability distribution $f: C_i\rightarrow[0,1]$ (so $\sum_{c\in C_i}f(c)=1$) with the input of a player $i$ and a stage $(G,\mathcal{H}_k)$ of a repeated \WLC-game.
\end{definition}
Since a protocol can depend on the full history of the current stage, it gives a mixed, memory-based strategy for any repeated \WLC-game. Thus protocols can informally be regarded as global ``behaviour styles" of agents
over the class of all repeated \WLC-games.
It is important to note that all players can see (and remember) the previous choices selected by all the other players---and also the order in which the 
choices have been made.

In the scenario that we study, it is obvious to require that the protocols should act \emph{independently of the names of choices and the names (or ordering) of player roles $i$}.\footnote{Note that if this assumption is not made,  coordination can trivially be guaranteed in a single round in any \WLC-game by using a protocol which chooses some winning choice profile with probability $1$.}
In~\cite{CrawfordHaller95}, this requirement follows from the ``assumption of no common language'' (for describing the game),
and in \cite{lorijournal}, such protocols are
called \emph{structural}.
We shall adopt the
terminology of \cite{lorijournal}.
To extend this concept for repeated games, we
first need to define the notion of a \emph{renaming}. 
The intuitive idea of 
renamings is to extend \emph{isomorphisms} between game 
graphs---including the history---to additionally enable
\emph{permuting the players} $1,\dots , n$ (see Example A.2 in \cite{arxivopt222} for an illustration of the definition).

\begin{definition}\label{def: renamings}
%
A \defstyle{renaming} between
stages $(G,\mathcal{H}_k)$ and $(G',\mathcal{H}_k')$ of $n$-player \WLC-games $G$ and $G'$ is a
pair $(\beta,h)$ where $\beta$ is a permutation of $\{1,\dots , n\}$ and $h$ a
bijection from the domain of $G$ to that of $G'$ such that


\begin{itemize}
\item 
$c\in C_{\beta(i)}\Leftrightarrow h(c) \in C_i'$ for all $i\leq n$ and $c$ in the domain of $G$,

\item 
$(c_1,\dots , c_n) \in W_G
\Leftrightarrow (h(c_{\beta(1)}),\dots , h(c_{\beta(n)}))\in W_{G'}$,

\item 
$(c_1,\dots , c_n) \in H_i
\Leftrightarrow (h(c_{\beta(1)}),\dots , h(c_{\beta(n)}))\in H_i'$ for all $i\leq k$.
\end{itemize}


If $(G,\mathcal{H}_k)$ and $(G',\mathcal{H}_k')$ have the
same domain $A$, we say that $(\beta,h)$ is a \defstyle{renaming of $(G,\mathcal{H}_k)$}.
Choices $c\in C_i$ and $d\in C_j$ are \defstyle{structurally equivalent}, denoted by
$c\sim d$, if there is a renaming $(\beta,h)$ of $(G,\mathcal{H}_k)$ s.t. $\beta(i)=j$ and $h(c)=d$.
%
%
It is easy to see that $\sim$ is an equivalence relation on~$A$. We denote the equivalence class of a choice $c$ by $[c]$.
\end{definition}

\begin{definition}\label{structuralprotocoldfn}
A protocol $\pi$ is \defstyle{structural} if it is indifferent with
respect to renamings,
i.e.,
if $(G,\mathcal{H}_k)$ and $(G',\mathcal{H}_k')$ 
are stages with a renaming $(\beta,\pi)$ between them,
then for any $i$ and any $c\in C_i$, we have 
$
	f(c)=f'(\pi(c)),
$
where $f= \pi((G,\mathcal{H}_k),i)$
and $f' = \pi((G',\mathcal{H}_k'),\beta(i))$.

\end{definition}

%
%
Note that a structural protocol may depend on the full
history which records even the order in which the choices have been played.
%
%
%
%
Hereafter we assume all protocols to be structural.

\begin{definition}\label{def: similarity}
Let $G$ be a \WLC-game and let $S$ and $S'$ be stages of $G$. Let $\sim$
(respectively, $\sim'$) be the structural equivalence relation over $S$ (respectively, $S'$).
We say that $S$ and $S'$ are \defstyle{automorphism-equivalent} if $\sim \,=\, \sim'$.
The stages $S$ and $S'$ are \defstyle{structurally similar} if one can be obtained
from the other by a chain of renamings and automorphism-equvalences.
\end{definition}


%
A choice $c$ in a
stage $S$ is a \defstyle{focal point} if it is not
structurally equivalent to any other choice in
that same stage $S$,
with the possible exception of choices $c'$
that \emph{all belong to some single edge with $c$}.
%
A~focal point breaks symmetry and can be used for winning a 
repeated
coordination
game. This requires that the players have some
(possibly prenegotiated) way to agree on which 
focal point to use. 
See the following example for an illustration.
%
%

\begin{example}\label{ex: focal points}
%
Consider the first two rounds of the game $\CM_5$, pictured below,
where the players fail to coordinate by
first selecting the pair $(a_1,b_2)$ and then fail 
again by selecting the pair $(b_1,c_2)$. 


\begin{center}
\begin{tikzpicture}[scale=0.55,choice/.style={draw, circle, fill=black!100, inner sep=2.1pt},
	location1/.style={draw, very thick, densely dotted, circle, color=black!77, inner sep=3.2pt},
	location2/.style={draw, thick, circle, color=black!77, inner sep=3.2pt}]
	\node at (-3,4) {$\CM_5:$};	
	\node at (0,0) [choice] (00) {};
	\node at (2,0) [choice] (20) {};
	\node at (0,1) [choice] (01) {};
	\node at (2,1) [choice] (21) {};
	\node at (0,2) [choice] (02) {};
	\node at (2,2) [choice] (22) {};
	\node at (0,3) [choice] (03) {};
	\node at (2,3) [choice] (23) {};
	\node at (0,4) [choice] (04) {};
	\node at (2,4) [choice] (24) {};
    \node at (04) [location1] {};
    \node at (23) [location1] {};
    \node at (03) [location2] {};
    \node at (22) [location2] {};
	\draw[thick] (00) to (20);
	\draw[thick] (01) to (21);
	\draw[thick] (02) to (22);
	\draw[thick] (03) to (23);
	\draw[thick] (04) to (24);
	\node at (-0.8,0) {\small $e_1$};
	\node at (-0.8,1) {\small $d_1$};
	\node at (-0.8,2) {\small $c_1$};
	\node at (-0.8,3) {\small $b_1$};	
	\node at (-0.8,4) {\small $a_1$};	
	\node at (2.8,0) {\small $e_2$};
	\node at (2.8,1) {\small $d_2$};
	\node at (2.8,2) {\small $c_2$};
	\node at (2.8,3) {\small $b_2$};	
	\node at (2.8,4) {\small $a_2$};	
\end{tikzpicture}
\end{center}
The structural equivalence classes become 
modified in this scenario as follows:
\begin{itemize}
\item Initially all choices are structurally equivalent.
\item After the first round, the equivalence classes are $\{a_1,b_2\}$, $\{b_1,a_2\}$ and $\{c_1,d_1,e_1,c_2,d_2,e_2\}$.
\item After the second round, the equivalence classes are $\{a_1\}$, $\{a_2\}$, $\{b_1\}$, $\{b_2\}$, $\{c_1\}$, $\{c_2\}$ and $\{d_1,e_1,d_2,e_2\}$.
\end{itemize}

There are no focal points in the initial stage $S_0$ and the
same is true for the next stage $S_1$. However, in the stage $S_2$, all the choices $a_1,b_1,c_1,a_2,b_2,c_2$ become focal points, and the players can thus immediately guarantee coordination in the third round by selecting any winning pair of focal points, i.e., any of the pairs $(a_1,a_2)$, $(b_1,b_2)$, $(c_1,c_2)$.
(We note that, from the point of view of the general study of
rational choice, it may not be obvious which of these
pairs should be selected, so a convention may be needed to fix which protocol to use.)
For another type of example on focal points, see Example A.3 in \cite{arxivopt222}.

\end{example}

In repeated coordination games, it is natural to try to
\emph{coordinate as quickly as possible}.
There are two principal scenarios related to optimizing coordination times:
the \emph{average case} and the \emph{worst case}. The
former concerns the expected
number rounds for coordination and the latter the maximum number in
which coordination can be guaranteed with certainty.

\begin{definition}
Let $(G,\mathcal{H}_k)$ be a stage and let $\pi$ be a protocol.
%
The \defstyle{one-shot coordination probability (\OSCP) from $(G,\mathcal{H}_k)$ with $\pi$} is the probability of
coordinating in a single round from $(G,\mathcal{H}_k)$ when each player follows $\pi$.
%
The \defstyle{expected coordination time (\ECT) from $(G,\mathcal{H}_k)$ with $\pi$} is the
expected value for the number of rounds until
coordination from $(G,\mathcal{H}_k)$ when all players follow $\pi$.
The \defstyle{guaranteed coordination time (\GCT) from $(G,\mathcal{H}_k)$ with $\pi$} is the number $n$ such
that the players are \emph{guaranteed} to coordinate from $(G,\mathcal{H}_k)$ in $n$ rounds, but not in $n-1$ rounds,
when all players follow $\pi$, if such a
number exists. Otherwise this value is $\infty$.
%
%
%

The \OSCP, \ECT and \GCT
from the initial stage $(G,\emptyset)$ with $\pi$ are
referred to as the \OSCP, \ECT and \GCT in $G$ with $\pi$.
We say that $\pi$ is \defstyle{\ECT-optimal} for $G$ if $\pi$
gives the minimum \ECT in $G$, i.e.,  the \ECT given by any protocol $\pi'$ is at
least as large as the one given by $\pi$.
\defstyle{\GCT-optimality} of $\pi$ for $G$ is defined analogously.
\end{definition}



It is possible that there are several different
protocols giving the optimal \ECT (or \GCT)
for a given \WLC-game.
If two protocols $\pi_1$ and $\pi_2$ are both optimal, it may
be that the optimal value is nevertheless \emph{not} obtained when
some of the players follow $\pi_1$ and the
others $\pi_2$.
This leads to a meta-coordination problem about choosing the same
optimal protocol to follow. 
However, such a problem will be avoided if there exists a unique optimal protocol.

\begin{definition}
Let $\pi$ be a protocol and $G$ a \WLC-game.
We say that $\pi$ is \defstyle{uniquely \ECT-optimal} for $G$ if $\pi$ is \ECT-optimal for $G$ and the following holds for all
other protocols $\pi'$ that are \ECT-optimal for $G$: 
for any stage $S$ in $G$ that is reachable with $\pi$, we have $\pi'(S)=\pi(S)$.  
\defstyle{Unique \GCT-optimality} of $\pi$ for $G$ is defined analogously.\footnote{Note that if two different
protocols are uniquely \ECT-optimal for G (and
similarly for unique \GCT-optimality), then their behaviour on $G$
can differ only on stages that are not reachable in the first place by the protcols. Also, their behaviour can of
course differ on games other than $G$.}
\end{definition}

The next lemma states that two structurally similar
stages are essentially the same
stage with respect to different {\ECT}s and {\GCT}s. 
The proof is straightforward.

\begin{lemma}\label{similarity lemma}
Assume stages $S$ and $S'$ of $G$ are structurally similar.
Now, for any protocol $\pi$, 
there exists a protocol $\pi'$ which
gives the same \ECT and \GCT from $S'$ as $\pi$ gives from $S$.
\end{lemma}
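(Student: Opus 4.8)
The plan is to unwind the definition of "structurally similar" (Definition~\ref{def: similarity}) into its two generating moves — renamings and automorphism-equivalences — and to handle each move separately, then compose. So first I would reduce to two cases: (a) $S'$ is obtained from $S$ by a single renaming $(\beta,h)$, and (b) $S'$ is automorphism-equivalent to $S$, i.e.\ $\sim\,=\,\sim'$ as equivalence relations on the common domain. The general case follows by a finite induction along the chain of renamings and automorphism-equivalences witnessing structural similarity, composing the protocols produced at each link (and using that the construction is symmetric in $S,S'$, so it also runs backwards along the chain).

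For case (a), given a protocol $\pi$ I would define $\pi'$ by ``pulling back through the renaming'': on any stage $T'$ reachable from $S'$, note that the renaming $(\beta,h)$ induces a renaming between $T'$ and a corresponding stage $T$ reachable from $S$ (extend $h$ and $\beta$ to histories in the obvious way — this is where one checks, routinely, that the three clauses of Definition~\ref{def: renamings} are preserved under appending one more choice profile). Then set $\pi'((T',\mathcal H'),i) := $ the distribution obtained by transporting $\pi((T,\mathcal H),\beta^{-1}(i))$ along $h$. On stages not of this form, let $\pi'$ agree with $\pi$. Since $\pi$ is assumed structural (we are working under the standing assumption that all protocols are structural), the transported distribution is well-defined regardless of which witnessing renaming is chosen, and a one-round computation shows the one-shot coordination probability from $T'$ under $\pi'$ equals that from $T$ under $\pi$, and the successor stages again correspond. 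A straightforward induction on the number of rounds then gives that the distribution over coordination times from $S'$ under $\pi'$ equals that from $S$ under $\pi$; hence the \ECT and the \GCT coincide.

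For case (b), when $\sim\,=\,\sim'$ on the same domain, the point is that a structural protocol can only ``see'' a stage through its structural-equivalence classes of choices (and their incidence with edges and histories, which are literally the same since the domain and the game are the same). So here one can simply take $\pi' = \pi$: because $\sim$ and $\sim'$ agree, $\pi$ assigns the same distributions in $S'$ as the analysis of $S$ would, and the same inductive argument as in case (a) — now with the identity correspondence on stages — yields equality of \ECT and \GCT. Finally I would assemble the general statement: write the structural similarity as $S = S_0, S_1, \dots, S_\ell = S'$ with each consecutive pair related by a renaming or an automorphism-equivalence, apply the appropriate case to produce $\pi = \pi_0, \pi_1, \dots, \pi_\ell = \pi'$ with each $\pi_{j}$ giving from $S_{j}$ the same \ECT and \GCT as $\pi_{j-1}$ gives from $S_{j-1}$, and conclude by transitivity.

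The main obstacle — and it is bookkeeping rather than a genuine difficulty, which is why the paper calls the proof ``straightforward'' — is the case~(a) verification that renamings lift coherently from stages to their successor stages and that the transport of distributions along $h$ is independent of the chosen witness, which is exactly where structurality of $\pi$ is used; everything else (the inductions on the number of rounds, the composition along the chain) is routine.
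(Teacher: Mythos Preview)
Your decomposition into renamings and automorphism-equivalences, and the chain-composition at the end, are the right scaffolding. Case~(a) is correct, though it can be shortened: since $\pi$ is already assumed structural, a renaming between $S$ and $S'$ forces $\pi$ itself to assign (transported) identical distributions at both stages and at all corresponding successors, so one may simply take $\pi' = \pi$ without any pullback construction.

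Case~(b), however, has a genuine gap. You claim that $\pi' = \pi$ works because ``a structural protocol can only see a stage through its structural-equivalence classes.'' This is not so: structurality constrains $\pi$ to give equal weight to $\sim$-equivalent choices \emph{within a fixed stage}, but it does not force $\pi$ to behave identically at two distinct stages merely because they share the same $\sim$. Concretely, in $\CM_m$ the stage $S$ with history $\bigl((a_1,b_2)\bigr)$ and the stage $S'$ with history $\bigl((a_1,b_2),(b_1,a_2)\bigr)$ are automorphism-equivalent (same $\sim$), yet there is no renaming between them --- they even have different history lengths --- and a structural protocol may legitimately play only untouched edges at $S$ and only touched edges at $S'$. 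Then $\pi$ from $S$ and $\pi$ from $S'$ yield different {\ECT}s and {\GCT}s. Your parenthetical that the ``histories are literally the same'' is precisely where the reasoning slips: the domain and the game $G$ are the same, but the histories are not.

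The fix is to do in case~(b) what you did in case~(a): construct $\pi'$ by \emph{replaying} $\pi$, so that at the stage $S'$ extended by any further history $\mathcal{H}''$, the protocol $\pi'$ uses the distribution that $\pi$ uses at $S$ extended by the same $\mathcal{H}''$. The role of $\sim_S = \sim_{S'}$ is then to guarantee that this transplanted distribution is still admissible for a structural protocol at the target stage, and one checks inductively that the $\sim$-relations stay synchronized along corresponding successors. This, rather than case~(a), is where the non-bookkeeping content of the lemma lives.
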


\section{Protocols for Repeated \WLC-Games}
\label{sect:solvability}

In this section we introduce two special protocols,
the \emph{loop avoidance protocol} \LA and the
\emph{wait-or-move protocol} \WM. Informally, \LA asserts
that in every round, every player $i$
should avoid---if possible---all choices $c$ that could possibly make
the resulting stage
automorphism-equivalent 
(cf. Def.~\ref{def: similarity}) 
to the current stage, i.e., the stage just before selecting $c$.

\begin{definition}\label{def: RES}
The \defstyle{loop avoidance protocol} (\LA) asserts 
that in every round, every player $i$
should avoid---if possible---all choices $c$ for which the
following condition holds:
if the player $i$ selects $c$, then there exist
choices for the other players so that the 
resulting stage is automorphism-equivalent to
the current stage.
If this condition holds for all choices of the player $i$, then $i$ makes a random choice. Moreover, uniform probability is
used among all the possible choices of $i$.
%
\end{definition}


It is easy to see that \LA avoids, when possible, 
all such stages that are structurally similar to 
\emph{any} earlier stage in the repeated play.
As structurally similar stages are
essentially identical (cf. Lemma \ref{similarity lemma}),
repetition of such stages can be
seen as a ``loop'' in the repeated play. 
When trying to \emph{guarantee} coordination as
quickly as possible, such loops should be avoided.
%
%
%
%
In addition to this heuristic justification,
Theorems \ref{the: even m} and \ref{the: odd m}
give a fully compelling justification for \LA
when considering guaranteed coordination in two-player \CM-games.
For now, we present the following results
(for the proofs, see the correspondingly numbered Propositions 4.2 and 4.3 in \cite{arxivopt222});
see also Example~\ref{ex: use of LA} below
for an illustration of the use of \LA. 
%

\begin{proposition}\label{the: LG3}
$\LA$ is uniquely \ECT-optimal and uniquely \GCT-optimal in $\CM_3$.
\end{proposition}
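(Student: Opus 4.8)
The plan is to reduce the whole question to the analysis of the single non-winning stage reached after the first round. In the initial stage of $\CM_3$ all six choices are structurally equivalent, so by structurality (Definition~\ref{structuralprotocoldfn}) \emph{every} protocol plays the uniform distribution in round~$1$; hence the \OSCP\ in round~$1$ is $\tfrac13$ for every protocol, in particular it is $<1$, and with probability $\tfrac23$ the play reaches a non-winning stage. The six non-winning outcomes are pairwise related by renamings, hence structurally similar, so by Lemma~\ref{similarity lemma} it is enough to work with one representative; write $S_1$ for the stage with history $\big((a_1,b_2)\big)$ (in the notation used for $\CM_5$ in Example~\ref{ex: focal points}, restricted to $\CM_3$). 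A direct computation of the renamings of $S_1$ shows that its structural equivalence classes are exactly $\{a_1,b_2\}$, $\{b_1,a_2\}$, $\{c_1,c_2\}$, and that $S_1$ contains no focal points.

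Next I would parametrise the behaviour of an arbitrary protocol $\pi$ at $S_1$. Writing player~$1$'s distribution on $(a_1,b_1,c_1)$ as $(p_a,p_b,p_c)$, the equivalences $a_1\sim b_2$, $b_1\sim a_2$, $c_1\sim c_2$ together with structurality force player~$2$'s distribution on $(a_2,b_2,c_2)$ to be $(p_b,p_a,p_c)$, so the probability of coordinating in one further round from $S_1$ under $\pi$ is
\[
g(p_a,p_b,p_c)\ :=\ 2p_ap_b+p_c^2 .
\]
The elementary fact I would use is the identity $g(p_a,p_b,p_c)=1-p_a(p_a+2p_c)-p_b(p_b+2p_c)$, valid whenever $p_a+p_b+p_c=1$; since both subtracted terms are $\ge 0$ on the simplex this gives $g\le 1$, with equality if and only if $p_a=p_b=0$, i.e.\ if and only if $(p_a,p_b,p_c)=(0,0,1)$. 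Thus the \emph{unique} way to make the \OSCP\ from $S_1$ equal to $1$ is for both players to move deterministically to the untouched edge $(c_1,c_2)$.

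Then I would check that $\LA$ (Definition~\ref{def: RES}) does exactly this at $S_1$: if player~$1$ repeats $a_1$, the opponent can play $b_2$ and obtain a non-winning stage with the same structural equivalence relation as $S_1$ (and symmetrically for $b_1$ versus $a_2$), so $a_1$ and $b_1$ are ``loop'' choices forbidden by $\LA$, whereas playing $c_1$ cannot reproduce the class $\{c_1,c_2\}$ unless the opponent also plays $c_2$, which wins; hence $c_1$ is the only admissible choice and $\LA(S_1)=(0,0,1)$, i.e.\ $\LA$ plays $(c_1,c_2)$. For \GCT: no protocol can guarantee coordination in round~$1$ (the \OSCP\ there is forced to be $\tfrac13$), whereas $\LA$ guarantees it in round~$2$, so $\GCT_{\LA}(\CM_3)=2$ is optimal; and any \GCT-optimal $\pi'$ has $\GCT=2$, hence must coordinate with certainty in one round from $S_1$, i.e.\ achieve \OSCP$\,=1$ there, which by the identity above forces $\pi'(S_1)=(0,0,1)=\LA(S_1)$. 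Since $\pi'$ also agrees with $\LA$ on the (forced) initial stage and the only non-final stages reachable under $\LA$ are structural copies of the initial stage and of $S_1$, this gives unique \GCT-optimality.

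Finally, for \ECT: for any protocol $\pi$ we have $\ECT_\pi(\CM_3)=1+\tfrac23 E$, where $E$ is the expected number of further rounds from $S_1$ under $\pi$, and $E\ge 2-\OSCP_\pi(S_1)\ge 1$ with equality throughout exactly when $\OSCP_\pi(S_1)=1$; by the optimisation above this happens precisely when $\pi(S_1)=\LA(S_1)$, so $\ECT_\pi(\CM_3)\ge 1+\tfrac23$ with equality iff $\pi$ agrees with $\LA$ at $S_1$. Hence $\LA$ is \ECT-optimal with optimal \ECT $\,1+\tfrac23$ (in agreement with Figure~\ref{fig: summary table}) and, exactly as in the \GCT\ case, uniquely so. The only genuine work is the structural bookkeeping --- identifying the single post-failure stage $S_1$ and its equivalence classes, reading off the constraint structurality puts on protocols at $S_1$, and verifying that $\LA$ selects $(c_1,c_2)$ there; the optimisation itself is the one-line identity for $g$, and I expect the $\LA$-verification to be the delicate step, since it requires tracking how the structural equivalence relation evolves under each possible round-$2$ continuation.
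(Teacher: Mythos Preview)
The paper itself does not give a proof of this proposition; it defers to the preprint \cite{arxivopt222}. Your argument is correct and self-contained, and it is in the same spirit as the informal discussion of $\CM_3$ in the introduction and the general machinery (formula~(E)) the paper develops later for $\CM_m$ with $m\ge 4$. One difference worth noting: the paper's formula-(E) approach first restricts to protocols that are uniform on the two touched choices (invoking an auxiliary Lemma~B.2 from the preprint to justify this) and then optimises over a single parameter~$p$; your direct optimisation over the full simplex $(p_a,p_b,p_c)$ via the identity $g=1-p_a(p_a+2p_c)-p_b(p_b+2p_c)$ avoids that reduction entirely, which is cleaner for this small case.

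One small correction: your claim that ``$S_1$ contains no focal points'' is false. By the paper's definition, $c_1$ (and $c_2$) \emph{is} a focal point in $S_1$, since its only structurally equivalent partner $c_2$ lies on the common edge $(c_1,c_2)$. This slip is harmless---you never use the claim---but it is worth fixing, especially as the existence of this focal point is precisely \emph{why} $\LA$ (and the optimal protocol) can win with certainty from $S_1$.
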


%

\begin{proposition}\label{the: GCT with RES}
$\LA$ guarantees coordination in games $\CM_m$ in $\lceil m/2 \rceil$ rounds when $m$ is odd, 
but $\LA$ does not guarantee coordination in $\CM_m$ for any even $m$.
\end{proposition}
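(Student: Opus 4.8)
The plan is to pin down exactly what \LA does at every stage of a repeated play of $\CM_m$ by keeping track of the automorphism groups of the stages. Fix notation: $C_1=\{a_1,\dots,a_m\}$, $C_2=\{b_1,\dots,b_m\}$, with winning edges exactly the pairs $(a_i,b_i)$; call a row $i$ \emph{touched on side $1$} (resp.\ \emph{on side $2$}) in a stage if $a_i$ (resp.\ $b_i$) occurs in its history. The key claim, to be proved by induction on $\ell$, is that in any play consistent with \LA whose first $\ell\le\lfloor m/2\rfloor$ rounds all fail, the stage $S_\ell$ reached after round $\ell$ has exactly $\ell$ rows touched on side $1$ and $\ell$ rows touched on side $2$ (these $2\ell$ rows distinct); its automorphism group is $\{(\mathrm{id},\mathrm{id}),(\text{swap players},\tau_\ell)\}$, where $\tau_\ell=\prod_{t\le\ell}(p_t\ q_t)$ with $(a_{p_t},b_{q_t})$ the profile chosen in round $t$; and hence its structural-equivalence classes are $\{a_{p_t},b_{q_t}\}$ and $\{b_{p_t},a_{q_t}\}$ for $t\le\ell$, plus, if any rows remain untouched, one further class consisting of all choices in untouched rows.

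The base case $\ell=0$ is the initial stage of $\CM_m$, whose automorphism group is all row-permutations together with the player swap, so every choice is structurally equivalent. For the step I would read off \LA's behaviour in round $\ell+1$ from $S_\ell$. If player~$1$ plays a touched-on-side-$1$ choice $a_{p_t}$, the response $b_{q_t}$ merely repeats the round-$t$ profile in the history, adding no new constraint on automorphisms; so $\sim$ is unchanged, the resulting stage is automorphism-equivalent to $S_\ell$, and \LA forbids $a_{p_t}$. The same argument with the response $b_{p_t}$ forbids every choice $a_{q_t}$. Conversely, a choice $a_u$ in an \emph{untouched} row is never loopy: $b_u$ coordinates; a response $b_v$ with $v$ a touched row destroys the player-swap automorphism (it forces $\tau(p_t)=u$ while round $t$ forces $\tau(p_t)=q_t\neq u$), yielding strictly more classes; and a response $b_v$ with $v\neq u$ also untouched splits the single ``untouched'' class into $\{a_u,b_v\}$, $\{b_u,a_v\}$ and a remainder. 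Hence in round $\ell+1$: if at least two rows are untouched, \LA is uniform over the side-$1$ choices in untouched rows (symmetrically for player~$2$), so the round either coordinates or touches one fresh row on each side, giving $S_{\ell+1}$ and continuing the induction; if exactly one row $u$ is untouched then $a_u$ (resp.\ $b_u$) is the unique non-loopy choice of player~$1$ (resp.\ $2$)---in fact a focal point---so \LA forces $(a_u,b_u)$ and they coordinate; if no row is untouched then \emph{every} choice is loopy and \LA plays uniformly over all $m$ choices.

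Both halves of the proposition now follow. When $m=2k+1$, after $k$ failed rounds exactly one row is untouched, so round $k+1$ is forced to coordinate; thus \LA guarantees coordination within $k+1=\lceil m/2\rceil$ rounds, and this is tight because each of the first $k$ rounds is a uniform choice among $\ge 2$ untouched rows on each side and so fails with positive probability. When $m=2k$, after $k$ failed rounds no row is untouched, so by the step above \LA plays uniformly over all $m$ choices; in particular the profile $(a_{p_1},b_{q_1})$ has positive probability, and re-choosing it imposes no constraint beyond the ones already present, recreating the same automorphism group and hence a stage automorphism-equivalent to the previous one---so \LA is again uniform in the next round, and likewise thereafter. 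Iterating, the infinite non-coordinating play in which rounds $k+1,k+2,\dots$ all repeat $(a_{p_1},b_{q_1})$ is consistent with \LA; hence for no $N$ is coordination guaranteed within $N$ rounds, and the \GCT\ of \LA in $\CM_m$ is $\infty$.

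The main obstacle is the inductive step, and inside it the ``converse'' direction---showing that an untouched-row choice can never be answered so as to reproduce the current structural-equivalence relation. This needs the case analysis above on the type of the response together with a clean understanding of how each possible response changes the (order at most $2$) automorphism group of the stage; once that is settled, the rest is bookkeeping.
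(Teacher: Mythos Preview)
Your approach is essentially the paper's (the proof is deferred to the preprint, but Example~\ref{ex: use of LA} and the parenthetical in the proof of Theorem~\ref{the: odd m} that ``\LA never chooses from a touched edge in a \CM-game with an odd number of edges'' make the intended line clear): track touched versus untouched edges, show \LA always plays an untouched edge while one exists, and split on parity. The odd/even endgame and the tightness remark are handled correctly.

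There is one misstatement you should fix. Your inductive hypothesis asserts that the renaming group of $S_\ell$ is $\{(\mathrm{id},\mathrm{id}),(\text{swap},\tau_\ell)\}$, but this is false whenever $m-2\ell\ge 2$: with $\beta=\mathrm{id}$ one may permute the untouched rows arbitrarily, and with the player swap the row permutation may act arbitrarily on untouched rows in addition to swapping $p_t\leftrightarrow q_t$. Your ``hence'' is in fact internally inconsistent: if the group really had order~$2$ with $\tau_\ell$ fixing untouched rows, then distinct untouched choices $a_i,a_j$ would \emph{not} be structurally equivalent, contradicting your own claim that all untouched choices lie in one class. Fortunately the argument never actually uses the order-$2$ claim---it uses only the equivalence classes, which you have right---so the fix is simply to state the correct group (or drop the group and carry only the classes in the hypothesis). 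In the even case at $\ell=k$ there are no untouched rows left, so there the order-$2$ description is finally correct and your tail argument is fine as written. A small notational slip in the ``touched response'' case: the new entry $(a_u,b_v)$ forces $\sigma(v)=u$, which conflicts with $\sigma(v)\in\{p_s,q_s\}$ coming from round $s$ where $v$ was touched; your ``$\tau(p_t)=u$'' has the indices transposed.
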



\begin{example}\label{ex: use of LA}
We illustrate the use of the \LA protocol in the game $\CM_5$, pictured below.
Suppose that coordination fails in the first round. By symmetry, we may assume that the players selected $a_1$ and $b_2$. Now, in the resulting stage $S_1$, the structural equivalence classes are $\{a_1,b_2\}$, $\{b_1,a_2\}$ and $\{c_1,d_1,e_1,c_2,d_2,e_2\}$.

If the pair $(b_1,a_2)$ is selected in the next round, then the structural equivalence classes do not change and thus the resulting next stage is automorphism-equivalent to $S_1$. Hence, by following \LA, player 1 should avoid selecting $b_1$ and player 2 should avoid selecting $a_2$. For the same reason, the players should also avoid selecting the choices $a_1$ and $b_2$.

\begin{center}
\begin{tikzpicture}[scale=0.55,choice/.style={draw, circle, fill=black!100, inner sep=2.1pt},
	location1/.style={draw, very thick, densely dotted, circle, color=black!77, inner sep=3.2pt},
	location2/.style={draw, thick, circle, color=black!77, inner sep=3.2pt},
	location3/.style={draw, thick, rectangle, color=black!77, inner sep=4.4pt}]
	\node at (-3,4) {$\CM_5:$};	
	\node at (0,0) [choice] (00) {};
	\node at (2,0) [choice] (20) {};
	\node at (0,1) [choice] (01) {};
	\node at (2,1) [choice] (21) {};
	\node at (0,2) [choice] (02) {};
	\node at (2,2) [choice] (22) {};
	\node at (0,3) [choice] (03) {};
	\node at (2,3) [choice] (23) {};
	\node at (0,4) [choice] (04) {};
	\node at (2,4) [choice] (24) {};
    \node at (04) [location1] {};
    \node at (23) [location1] {};
    \node at (02) [location2] {};
    \node at (21) [location2] {};
    \node at (00) [location3] {};
    \node at (20) [location3] {};
	\draw[thick] (00) to (20);
	\draw[thick] (01) to (21);
	\draw[thick] (02) to (22);
	\draw[thick] (03) to (23);
	\draw[thick] (04) to (24);
	\node at (-0.8,0) {\small $e_1$};
	\node at (-0.8,1) {\small $d_1$};
	\node at (-0.8,2) {\small $c_1$};
	\node at (-0.8,3) {\small $b_1$};	
	\node at (-0.8,4) {\small $a_1$};	
	\node at (2.8,0) {\small $e_2$};
	\node at (2.8,1) {\small $d_2$};
	\node at (2.8,2) {\small $c_2$};
	\node at (2.8,3) {\small $b_2$};	
	\node at (2.8,4) {\small $a_2$};	
\end{tikzpicture}
\end{center}

Hence, by following \LA in $S_1$, the players will select among the set $\{c_1,d_1,e_1,c_2,d_2,e_2\}$ with the uniform probability distribution.
Supposing that they fail again in coordination, we may assume by symmetry that they selected the pair $(c_1,d_2)$.
The equivalence classes in the resulting stage $S_2$ are $\{a_1,b_2\}$, $\{b_1,a_2\}$, $\{c_1,d_2\}$, $\{d_1,c_2\}$ and $\{e_1,e_2\}$. Now, selecting any of the pairs $(a_1,b_2)$, $(b_1,a_2)$, $(c_1,d_2)$ and $(d_1,c_2)$ leads to a next stage which is automorphism-equivalent to~$S_2$. Thus, by following \LA in $S_2$, the players will select the pair $(e_1,e_2)$. This leads to guaranteed coordination in the third round.
\end{example}


We next present the 
\emph{wait-or-move protocol} \WM, which naturally
appears in numerous real-life two-player coordination scenarios. 
Informally, both players
alternate (with equal probability) between two choices:
the players own initial choice and another choice that coordinates 
with the initial choice of the other
player. 
%
%
%

\begin{definition}\label{definition: waitormove}
The \defstyle{wait-or-move protocol (\WM)} for repeated two-player \WLC-games 
goes as follows: first randomly select any choice $c$, and 
thereafter choose, with equal probability, $c$ or a choice $c'$ that
coordinates with the initial choice of the other player (thereby never
picking other choices than $c$ and $c'$).
%
%
\end{definition}

Definition
A.5 in \cite{arxivopt222} specifies
\WM in yet further detail.
The following theorem shows that $\WM$ is very fast in 
relation to {\ECT}s.
This holds for \emph{all} two-player \WLC-games, not only
choice matching games $\CM_m$.
%
%
(The claim is validated by the proof of Proposition 4.5 in \cite{arxivopt222}.)

\begin{theorem}\label{jormatheorem}
Let $G$ be a \WLC-game with
one-shot coordination probability $p$
when both players make their first choice randomly.
Then the expected coordination time by \WM is at most $3-2p$. 
Thus the \ECT with \WM is strictly less than $3$ in
every two-player \WLC-game.
\end{theorem}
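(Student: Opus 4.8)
The plan is to run a short conditioning argument on whether coordination already happens in the first round. Recall how \WM proceeds: each player first picks a choice uniformly at random---player $1$ picks $a\in C_1$, player $2$ picks $b\in C_2$---so the probability of coordinating in round $1$ is exactly the one-shot coordination probability $p$. If this fails, then player $1$ fixes a choice $a'\in C_1$ with $(a',b)\in W_G$ and player $2$ fixes $b'\in C_2$ with $(a,b')\in W_G$. Such $a',b'$ exist because no choice is surely losing (so $b$, respectively $a$, lies in \emph{some} winning profile), and moreover $a'\neq a$ and $b'\neq b$ precisely because $(a,b)\notin W_G$ in this branch. From round $2$ on, in each round player $1$ plays $a$ or $a'$ with probability $\tfrac12$ each, independently, and likewise player $2$ plays $b$ or $b'$.

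First I would bound the per-round success probability in the post-failure regime. In any round $k\ge 2$ the four equiprobable profiles are $(a,b)$, $(a,b')$, $(a',b)$, $(a',b')$: the first is non-winning (we are in the failure branch), the second and third are winning by the defining property of $b'$ and $a'$, and the fourth is indeterminate. Hence coordination occurs in round $k$ with probability $q$ for some $\tfrac12\le q\le\tfrac34$, and the rounds $k\ge 2$ are i.i.d.\ Bernoulli$(q)$ trials, since the tuple $a,a',b,b'$ is fixed once round $1$ is over. Writing $T$ for the coordination time, it follows that, conditioned on $T>1$, the quantity $T-1$ is geometrically distributed with parameter $q$, so $\mathbb{E}[T\mid T>1]=1+\tfrac1q\le 1+2=3$.

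Next I would assemble the two cases: $\mathbb{E}[T]=p\cdot 1+(1-p)\,\mathbb{E}[T\mid T>1]\le p+3(1-p)=3-2p$, which is the claimed bound. For the strict inequality $\mathbb{E}[T]<3$: since $W_G$ is non-empty, pick any winning profile $(c_1^{\ast},c_2^{\ast})\in W_G$; the uniform first choices hit it with probability at least $1/(|C_1|\cdot|C_2|)>0$, hence $p>0$ and thus $3-2p<3$.

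I do not expect a genuine obstacle here; the argument is elementary once the protocol is unpacked. The two care-points are (i) confirming that in the failure branch the ``move'' choices $a',b'$ exist and are distinct from the ``wait'' choices $a,b$---this is exactly where the no-surely-losing-choices assumption and the failure of round $1$ are used---and (ii) being precise that the post-failure rounds are independent with a fixed success probability at least $\tfrac12$, so the geometric computation is legitimate. Both are routine, so the main content of the write-up is just presenting the case split cleanly.
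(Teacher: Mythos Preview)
Your argument is correct and matches the paper's own proof: both condition on success or failure in round~1, then use that from round~2 onward each round succeeds with probability at least $\tfrac12$, so the conditional expected time is at most $1+2=3$ and hence $\mathbb{E}[T]\le p+3(1-p)=3-2p$. The paper writes the conditional expectation as the explicit series $\sum_{k\ge 2}2k/2^{k}=3$ while you invoke the geometric formula $1+1/q$, but this is purely cosmetic.
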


%
%
%
%

It follows from the proof of Theorem \ref{jormatheorem} that 
the \ECT with \WM is \emph{exactly} $3-\frac{2}{m}$ in all
choice matching games $\CM_m$. 
Thus the last claim of the theorem \emph{cannot be improved}, as
the {\ECT}s of the games $\CM_m$
grow asymptotically closer to the strict upper bound $3$ 
when $m$ is increased. 
%
%
%
In the particular case of $\CM_2$, the \ECT with \WM is $3-\frac{2}{2}=2$.
Thus the following clearly holds.

\begin{lemma}\label{roskalemma}
When $S =(\CM_m,\mathcal{H}_k)$ is a non-final stage with
exactly two touched edges, then the \ECT from $S$ with \WM is 
2. Moreover, in any \WLC-game $G$, if $S' = (G,\mathcal{H}_k)$ is a non-final stage 
that is reachable by using \WM, then the \ECT from $S'$ with \WM is
at most~2.
\end{lemma}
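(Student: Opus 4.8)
The plan is to reduce the entire statement to an elementary waiting-time estimate; the point is to understand what \WM does \emph{after the first round}. Once round~$1$ has been played, producing choices $c_1$ for player~$1$ and $c_2$ for player~$2$, Definition~\ref{definition: waitormove} confines each player~$\ell$ forever after to two choices: its own first choice $c_\ell$ and a choice $c_\ell'$ coordinating with the other player's first choice, so that $(c_1',c_2)\in W_G$ and $(c_1,c_2')\in W_G$; moreover in every later round player~$\ell$ picks one of these two with probability $\tfrac12$ each, independently of everything else. If the stage under consideration is non-final, then $(c_1,c_2)\notin W_G$, and hence $c_1\neq c_1'$ and $c_2\neq c_2'$. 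Everything then follows by computing, or lower-bounding, the per-round coordination probability.

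For the claim about $\CM_m$: a non-final stage $S=(\CM_m,\mathcal{H}_k)$ with exactly two touched edges has $k\geq 1$, and since each choice of $\CM_m$ lies on a unique winning profile (``row''), the round-$1$ profile consists of choices in two distinct rows $i\neq j$ (a matching round-$1$ profile would have ended the play), and all later profiles use only rows $i$ and $j$ and never match; in particular $S$ is itself reachable by \WM. Running \WM from $S$, the only choice coordinating with player~$2$'s first choice is player~$1$'s row-$j$ choice, and symmetrically for player~$2$, so in every round $k+1,k+2,\dots$ each player picks, uniformly and independently, one of its choices in rows $i$ and $j$; such a round produces coordination precisely when both pick the same row, which happens with probability $\tfrac12$. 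Hence the number of further rounds is geometrically distributed with parameter $\tfrac12$, and the \ECT from $S$ with \WM equals $2$.

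For the general claim, let $S'=(G,\mathcal{H}_k)$ be a non-final stage reached by playing \WM; then at least round~$1$ has occurred, so $k\geq 1$ (the initial stage is not \emph{reached} by using \WM), and the locked-in choices $c_1,c_1'$ of player~$1$ and $c_2,c_2'$ of player~$2$ are as above, with $(c_1',c_2),(c_1,c_2')\in W_G$ and $c_1\neq c_1'$. In any round from $S'$ on, the event ``player~$1$ plays a coordinator of $c_2$ and player~$2$ plays $c_2$'' has probability $\tfrac12\cdot\tfrac12=\tfrac14$ and yields a winning profile; likewise the event ``player~$1$ plays $c_1$ and player~$2$ plays a coordinator of $c_1$'' has probability $\tfrac14$ and yields a winning profile; and these two events are disjoint, since they disagree on player~$1$'s choice. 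Hence every round from $S'$ onward coordinates with probability at least $\tfrac12$, independently of the past; writing $T$ for the number of rounds until coordination, we get $\Pr[T\geq t]\leq(\tfrac12)^{t-1}$ for all $t\geq 1$, and therefore $\mathbb{E}[T]=\sum_{t\geq 1}\Pr[T\geq t]\leq 2$.

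I do not anticipate a real obstacle here: as with the preceding lemma, this is essentially bookkeeping. The points deserving a line of care are that after round~$1$ the \WM option sets are exactly the two relevant choices; that non-finality of the stage forces $c_1\neq c_1'$; and hence that the two guaranteed-winning profiles $(c_1',c_2)$ and $(c_1,c_2')$ are distinct, so that their probabilities genuinely add up to $\tfrac12$ rather than being double-counted. As a consistency check, for $m=2$ every non-final stage of $\CM_2$ past round~$1$ automatically has exactly two touched edges, so the first part recovers the value $2$, in agreement with Theorem~\ref{jormatheorem}, where the \ECT with \WM in $\CM_m$ equals $3-\frac{2}{m}$.
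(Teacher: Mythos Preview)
Your proof is correct and follows essentially the same approach as the paper's: the paper's (very brief) argument simply notes that the one-shot coordination probability in each round is $\tfrac12$, so the coordination time is geometric with parameter $\tfrac12$ and the expected value is $\sum_{k\geq 1} k/2^k = 2$. Your version is more carefully fleshed out---in particular the verification that non-finality forces $c_1\neq c_1'$ and hence that the two winning events $(c_1',c_2)$ and $(c_1,c_2')$ are disjoint---but the underlying idea is identical.
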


\vcut{
\begin{proof}
In every round, 
the one-shot coordination probability is $\dfrac{1}{2}$.
Thus the probability of
coordinating in the $m$th round (and not earlier) is
%
%
$(\frac{1}{2})^{m-1} \cdot \frac{1}{2} = (\frac{1}{2})^m$
%
%
for all $m\geq 1$. 
Hence $ECT$ is
%
%
$E\ =\ \sum\limits_{\scriptscriptstyle k\geq 1}^{\scriptscriptstyle\infty} \dfrac{k}{2^{k}}$
%
%
and it is well known that the right hand side sum equals 2.
\end{proof}
}

\WM eventually leads to coordination with asymptotic
probability $1$ in all two-player \WLC-games. But 
it does not
guarantee (with certainty) coordination in any number of rounds
in \WLC-games where the winning relation is not the total relation.
%
%
In a typical real-life scenario, eternal non-coordination is of
course impossible by \WM, but it is conceivable, e.g., that two computing units using the very same pseudorandom number 
generator will never coordinate due to being synchronized to swap their choices in
precisely the same rounds.

It is easy to show that \WM is the unique protocol which
gives the optimal \ECT (namely, 2 rounds) in the ``droitwich-scenario'' 
of the game $\CM_2$ 
(see the proof of Proposition 4.8 in \cite{arxivopt222}).

\begin{proposition}\label{the: LG2}
\WM is uniquely \ECT-optimal in $\CM_2$.
\end{proposition}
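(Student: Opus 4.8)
The plan is to argue in two parts: first that no protocol beats an ECT of $2$ in $\CM_2$, and second that any protocol achieving ECT $=2$ must coincide with \WM on every stage reachable by \WM. For the lower bound, I would analyze the initial stage $(\CM_2,\emptyset)$. By structurality, the two choices of each player are structurally equivalent in the initial stage, so any structural protocol must play the uniform distribution in round $1$; hence the one-shot coordination probability in round $1$ is exactly $\tfrac12$, regardless of the protocol. With probability $\tfrac12$ the game ends after one round, and with probability $\tfrac12$ we reach a non-final stage $S_1$ with one touched edge. So the ECT of any protocol equals $\tfrac12\cdot 1 + \tfrac12\cdot(1 + \text{ECT from }S_1)$. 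Since coordination cannot occur instantly from $S_1$ (the single round-$1$ choice profile was not winning, and by structurality the players again cannot do better than uniform-probability guessing against a structural opponent within the relevant symmetry class — this needs a short argument), the ECT from $S_1$ is at least $2$, giving total ECT $\geq \tfrac12 + \tfrac12\cdot 3 = 2$. Combined with Theorem \ref{jormatheorem} (or the remark that \WM gives ECT exactly $3 - \tfrac22 = 2$ in $\CM_2$), the optimum is exactly $2$.

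For the structural analysis of $S_1$: after round $1$ the players have picked some non-matching pair, say $(a_1,b_2)$ in the labelling where the winning edges are $\{a_1,a_2\}$ and $\{b_1,b_2\}$. In $S_1$ the structural equivalence classes are $\{a_1\}$, $\{b_2\}$, $\{b_1\}$, $\{a_2\}$ — every choice is a focal point — but crucially, from player $1$'s perspective $a_1$ and $b_1$ are \emph{not} structurally equivalent to each other, so a structural protocol may assign them different probabilities; the constraint is only that $\pi$ at player $1$'s role matches $\pi$ at player $2$'s role under the renaming swapping the two edges (and swapping players). This renaming sends $a_1 \mapsto a_2$, $b_1 \mapsto b_2$ (or the reflection thereof), so it forces: the probability player $1$ assigns to "repeat my choice" equals the probability player $2$ assigns to "repeat my choice", and likewise for "switch". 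Writing $q$ for the common probability of "repeat", the one-shot coordination probability from $S_1$ is $q(1-q) + (1-q)q = 2q(1-q) \leq \tfrac12$, with equality iff $q = \tfrac12$. Then the ECT from $S_1$, viewed as a geometric-type recursion over structurally-equivalent successor stages, is minimized exactly when $q = \tfrac12$ in \emph{every} subsequent round, which is precisely the \WM prescription; any deviation strictly increases the per-round coordination probability below $\tfrac12$ and hence strictly increases the ECT. I would formalize this last recursion by noting that every non-final stage reachable by \WM from $S_1$ has exactly two touched edges and the same structural profile as $S_1$ (Lemma \ref{roskalemma} already packages the ECT $= 2$ consequence), so an optimal protocol is pinned down round by round to $q = \tfrac12$.

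The main obstacle is the bookkeeping for uniqueness rather than the lower bound: one must check that the structural-equivalence constraint in $S_1$ (and in each later stage reachable by \WM) really does leave exactly one free parameter $q$ per round and that the ECT is a strictly monotone function of the sequence of $q$-values around $q=\tfrac12$, so that ECT-optimality forces $q=\tfrac12$ at every reachable stage — this is what upgrades "ECT $=2$" to "equals \WM on all \WM-reachable stages", matching the definition of uniquely \ECT-optimal. A secondary subtlety is justifying that a structural protocol genuinely cannot exploit any asymmetry in round $1$ itself (handled by initial-stage symmetry) and that choices outside the two touched edges never arise under \WM, so we need not constrain $\pi$ there for the uniqueness claim. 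Everything else is the routine geometric-series computation already cited in the proof of Theorem \ref{jormatheorem}.
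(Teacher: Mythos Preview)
Your overall strategy matches the paper's (which calls the result easy and defers the details to the preprint): force the first round uniform, parametrize the post-failure stage $S_1$ by a single repeat-probability $q$, observe the one-shot coordination probability is $2q(1-q)\le\tfrac12$, and recurse over automorphism-equivalent successor stages to get $e^*\ge 1+\tfrac12 e^*$, hence $e^*\ge 2$, with equality forcing $q=\tfrac12$ everywhere.

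However, your structural-equivalence analysis of $S_1$ is wrong, and the error is not cosmetic. After the failed profile $(a_1,b_2)$, the equivalence classes are \emph{not} singletons. The only nontrivial renaming of $S_1$ has $\beta$ swapping player roles and $h$ given by $a_1\leftrightarrow b_2$, $b_1\leftrightarrow a_2$; you can check this preserves $H_1=\{(a_1,b_2)\}$. Thus the classes are $\{a_1,b_2\}$ and $\{b_1,a_2\}$, and no choice is a focal point (since $a_1\sim b_2$ lie on different edges). The renaming you write down, $a_1\mapsto a_2$, $b_1\mapsto b_2$, does \emph{not} preserve the history and is not a renaming of $S_1$. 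This matters: if the classes really were singletons as you claim, structurality would impose no constraint linking the two players' distributions, and your central step---that both players share a common repeat-probability $q$---would not follow. You arrive at the correct constraint, but via an incorrect renaming and in contradiction with your own focal-point claim; the correct renaming above is exactly what forces $f_1(a_1)=f_2(b_2)$ and $f_1(b_1)=f_2(a_2)$.

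Two minor points: the assertion ``ECT from $S_1$ is at least $2$'' in your first paragraph is unjustified there and only established by the $q$-recursion in the second paragraph; and ``strictly increases the per-round coordination probability below $\tfrac12$'' should read ``decreases''.
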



Next we compare the pros and cons of \LA and \WM in
two-player \CM-games.
Recall that \WM does not guarantee
coordination in $\CM_m$ (when $m\not=1$),
while \LA does guarantee coordination in $\CM_m$ if and only if $m$ is \emph{odd}.
Concerning \emph{expected} coordination times, it is easy to prove that \WM
gives a smaller \ECT than \LA in $\CM_m$ for all
even $m$ (except for the case $m=2$, where \WM and \LA behave identically).
Thus we now restrict attention to the games $\CM_m$ with odd $m$.
Then, the probability of coordinating in the $\ell$-th
round of $\CM_m$ using \LA, with $\ell\leq \lceil m/2 \rceil$, can
relatively easily be seen to be calculable by 
the formula $P_{\ell,m}$  
defined below (where the product is 
$1$ when $\ell=1$). And using the formula for $P_{\ell,m}$, we also 
get a formula for the expected coordination 
time $E_m$ in $\CM_m$ with $\LA$:
\vspace{-2mm}
\[
P_{\ell,m} = \frac{1}{m - 2\ell + 2}\prod\limits_{k=0}^{k\, =\, {\ell - 2}}\frac{m - 2k - 1}{m - 2k},
\qquad
	E_m =\!\!\! \sum\limits_{\ell=1}^{\ell=\lceil m/2\rceil}\!\!\!\ell\cdot P_{\ell,m}.
\]
%
%
%
%
%
%
Using this and Theorem \ref{jormatheorem}, we 
can compare the {\ECT}s in $\CM_m$ with \LA and \WM for odd~$m$ (see the following table).
%


\begin{center}
\scalebox{0.75
}{
\begin{tabular}{|c|c|c|}
\hline
$m$ & \ECT in $\CM_m$ with \WM & \ECT in $\CM_m$ with \LA \\
\hline
$1$ & 1 & $1$ \\
$3$ & $2+\frac{1}{3}$ & $1+\frac{2}{3}$ \\[1mm]
$5$ & $2+\frac{3}{5}$ & $2+\frac{1}{3}$ \\[1mm]
$7$ & $2+\frac{5}{7}$ & $3$ \\[1mm]
$9$ & $2+\frac{7}{9}$ & $3+\frac{2}{3}$ \\[1mm]
\hline
\end{tabular}
}
\end{center}


\vcut{
We have $E_1 = 1$, 
%
$E_3 = 1\frac{2}{3}$,
%
$E_5 = 2\frac{1}{3}$,
$E_7 = 3$ 
and 
$E_9 = 3\frac{2}{3}$. 
}

Especially the case $m=7$ is
interesting, as the \ECT with \LA is exactly 3 which is precisely the
strict upper bound for the {\ECT}s with \WM 
for the class of all two-player choice matching games $\CM_m$.
Furthermore, $m=7$ is the case 
where \WM becomes
faster than \LA in
relation to {\ECT}s. Thus \WM clearly
stays faster than \LA for all $m\geq 7$, including even values of $m$.
%


\vcut{
as the the
expected coordination times
for \WM when $m=5$ and $m=7$ are 
%
$2\frac{3}{5} > E_5$
and 
$2\frac{5}{7} < E_7$, 
respectively.
}

\vcut{

\anote{The following is an exercise that can be
erased (even from the IJCAI 2020 paper). The 
point is to remove a product operator $\Pi$ from
the equations above.}

\textcolor{blue}{From the shape of the right hand side of the equation 
for $P_{m,n}$, we see that the numerator of the  
product involves even truncated double factorials
(e.g., $8\cdot 6 \cdot 4$), while 
the denominator contains 
corresponding odd truncated double factorials
(e.g., $9\cdot 7 \cdot 5)$.
We can write\\
\begin{center}
$\prod\limits_{k=0}^{k\, =\, {m - 2}}\frac{n - 2k - 1}{n - 2k}
=\frac{(n - 1)!!}{n!!} \cdot \frac{(n - 2m + 2)!!}{(n - 2m + 1)!!}$.
\end{center}
Thus 
\begin{center}
$E_{n} = \sum\limits_{m=1}^{m=\lceil n/2\rceil}m\, \frac{1}{n - 2m + 2}
\frac{(n - 1)!!}{n!!} \cdot \frac{(n - 2m + 2)!!}{(n - 2m + 1)!!}$\\ \smallskip
$=\sum\limits_{m=1}^{m=\lceil n/2\rceil}m\, 
\frac{(n - 1)!!}{n!!} \cdot \frac{(n - 2m)!!}{(n - 2m + 1)!!}.$
\end{center}
}
\anote{It is not necessary to 
simplify this even up to here, as the earlier formulae will do.
However, to simplify even further, it makes sense to start evaluating this and 
then see what happens. A recursion formula can be 
worked out possibly. Another approach is to get the
ratio between successive terms of the sum and then use that.
It may also be sensible to fiddle with the double factorials.}
}




\section{Optimizing Guaranteed Coordination Times}\label{sec: GCTs}

In this section we investigate when coordination can be guaranteed in two-player \CM-games and which protocols give the optimal \GCT for them.
The following result is a direct consequence of the
symmetries of $\CM_m$ for even $m$ (see the
proof of Proposition 5.1 in \cite{arxivopt222}):

\begin{theorem}\label{the: even m}
For all even $m\geq 2$, there is no protocol
guaranteeing coordination in $\CM_m$.
\end{theorem}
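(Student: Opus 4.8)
Write the choices of player~$1$ as $a_1,\dots,a_m$ and those of player~$2$ as $b_1,\dots,b_m$, so that the winning profiles of $\CM_m$ are exactly the pairs $(a_i,b_i)$. The plan is to produce, for an arbitrary structural protocol $\pi$ and every $n$, a concrete $n$-round non-coordinating play that occurs with positive probability when both players follow $\pi$; this forces the \GCT\ of $\pi$ in $\CM_m$ to be infinite, and hence shows that no protocol guarantees coordination.

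The single ingredient extracted from the symmetry of $\CM_m$ is the following. Since $m$ is even, there is a fixed-point-free involution $\rho$ of $\{1,\dots,m\}$, e.g.\ the product of transpositions $(1\ 2)(3\ 4)\cdots(m-1\ m)$; this is the only place where evenness of $m$ is used. Consider the pair $r=(\beta,h)$ with $\beta=(1\ 2)$ the transposition of the two players and $h$ the bijection of the choice set sending $a_i\mapsto b_{\rho(i)}$ and $b_i\mapsto a_{\rho(i)}$. A short check shows that $r$ is a renaming of the initial stage $(\CM_m,\emptyset)$ in the sense of Definition~\ref{def: renamings}: it respects the two choice sets; it sends the profile $(a_i,b_i)$ to $(a_{\rho(i)},b_{\rho(i)})$ and hence preserves the winning relation; and the history condition is vacuous. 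The key point is that this \emph{same} $r$ will remain a renaming along a carefully chosen non-coordinating branch.

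The branch is built by induction. Suppose $S_k$ is a non-final stage of $\CM_m$ at depth $k$ for which $r$ is a renaming (base case $S_0=(\CM_m,\emptyset)$). Since $\pi$ is structural and $r$ is a self-renaming of $S_k$ that swaps the two players, the distributions $p=\pi(S_k,1)$ and $q=\pi(S_k,2)$ satisfy $p_i=q_{\rho(i)}$ for all $i$, i.e.\ $q_i=p_{\rho(i)}$. Pick any row $x_k$ with $p_{x_k}>0$ (such a row exists, as $\sum_i p_i=1$), and let the players play the profile $(a_{x_k},b_{\rho(x_k)})$ in round $k+1$: this profile has probability $p_{x_k}\cdot q_{\rho(x_k)}=p_{x_k}\cdot p_{\rho(\rho(x_k))}=p_{x_k}^{\,2}>0$, it is non-winning because $\rho$ has no fixed point, so the resulting stage $S_{k+1}$ is non-final, and $r$ is again a renaming of $S_{k+1}$, since the only new history entry $\{(a_{x_k},b_{\rho(x_k)})\}$ is mapped by $r$ to $\{(a_{\rho(\rho(x_k))},b_{\rho(x_k)})\}=\{(a_{x_k},b_{\rho(x_k)})\}$, using that $\rho$ is an involution. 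Iterating, for every $n$ the stage $S_n$ is non-final and is reached with probability $\prod_{j<n}p_{x_j}^{\,2}>0$ when both players follow $\pi$, so with positive probability no coordination has occurred after $n$ rounds. As $n$ and $\pi$ were arbitrary, no protocol guarantees coordination in $\CM_m$.

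The step requiring care is the bookkeeping that $r$ stays a renaming after each round, which is exactly where the two properties of $\rho$ are played against each other: fixed-point-freeness keeps the chosen profiles non-winning (so the play genuinely stays alive), while the involution property lets the \emph{same} symmetry $r$ persist, so that structurality of $\pi$ can be invoked afresh at every stage. It is worth stressing that structurality is indispensable here --- a non-structural protocol could put all its mass on one matching profile and win in a single round. One should also resist the naive route ``the one-shot coordination probability is $<1$ at every reachable stage'': reachable stages of $\CM_m$ can become rigid (admitting no nontrivial renamings), so that argument fails pointwise, and following the distinguished symmetric branch above is precisely what circumvents this.
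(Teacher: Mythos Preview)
Your proof is correct and follows essentially the same approach as the paper: use structurality together with the symmetry of $\CM_m$ for even $m$ to exhibit, for any protocol, an infinite non-coordinating play of positive probability. The paper's sketch phrases this as ``the players can pick structurally equivalent choices in every round'' and then case-splits on touched versus untouched edges (evenness entering as ``an even number of untouched edges always remains''), whereas you make the symmetry explicit by fixing a single player-swapping renaming built from a fixed-point-free involution and checking it persists along the chosen branch; the underlying idea is the same.
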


We then consider choice matching games $\CM_m$
with an odd $m$. Proposition~\ref{the: GCT with RES} showed that the \GCT with \LA
in these games is $\lceil m/2 \rceil$. 
The next theorem 
shows that this is the optimal \GCT for $\CM_m$, and
moreover, \LA is the unique protocol giving this \GCT. 
The proof of Theorem \ref{the: odd m} is an interesting
exercise in the optimally 
fast elimination of automorphisms.

\begin{theorem}\label{the: odd m}
For any odd $m\geq 1$, $\LA$ is uniquely \GCT-optimal for $\CM_m$.
\end{theorem}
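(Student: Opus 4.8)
By Proposition~\ref{the: GCT with RES}, $\LA$ guarantees coordination in $\CM_m$ in $\lceil m/2\rceil$ rounds for odd $m$, so it suffices to show (i) no protocol has $\GCT$ smaller than $\lceil m/2\rceil$ in $\CM_m$, and (ii) every $\GCT$-optimal protocol agrees with $\LA$ on all $\LA$-reachable stages. The argument hinges on one notion. Identify the winning profiles of $\CM_m$ with edges $e=\{a_e,b_e\}$, $a_e\in C_1$, $b_e\in C_2$, and call $e$ \emph{resolved} in a stage $S$ if $[a_e]\subseteq\{a_e,b_e\}$. A short computation using $\sum_e f_1(a_e)f_2(b_e)\le\sqrt{\sum_e f_1(a_e)^2}\cdot\sqrt{\sum_e f_2(b_e)^2}\le 1$, with equality only for matching point masses, together with structurality of protocols, shows: the \OSCP of a protocol at $S$ equals $1$ exactly when $S$ contains a resolved edge $e$ and the protocol puts all of player $1$'s weight on $a_e$ and all of player $2$'s weight on $b_e$. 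Hence a protocol can \emph{force} coordination in the next round only from a stage that contains a resolved edge.

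\textbf{Lower bound (step (i)).} Call a non-final failing stage $S$ \emph{matched} if its history, read as a multigraph on the edges of $\CM_m$ (an arc between $i_t$ and $j_t$ for the $t$-th round $(a_{i_t},b_{j_t})$), is a matching, i.e.\ a disjoint union of single game-edges, each possibly used in several rounds and in either direction. In a matched stage of length $r$ the involution $\tau_0$ swapping $i_t\leftrightarrow j_t$ on every used pair (arbitrary on the $\ge m-2r$ untouched edges), together with the player swap, is a renaming, so $a_{i_t}\sim b_{j_t}\ne b_{i_t}$ and no touched edge is resolved; and if $m-2r\ge 2$ the untouched edges form a symmetric block, so no untouched edge is resolved either. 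Thus a matched stage with $m-2r\ge 2$ has no resolved edge, hence \OSCP $<1$ there for every protocol. I then prove by induction on $r\le\lceil m/2\rceil-1$ that every protocol $\pi$ has a $\pi$-reachable failing matched stage of length $r$ (with at most $2r$ touched edges). The base $r=0$ is the initial stage. For the step from such a stage $S$ with $m-2r\ge 3$: structurality forces $\pi$ to use the uniform distribution on the symmetric untouched block for each player, and the renaming built from $\tau_0$ forces these two uniform values to coincide. If they are positive, letting the two players pick two \emph{distinct} untouched edges gives a positive-probability failing continuation that extends the matching by a fresh disjoint pair; if they are zero, both players play inside the touched block, and the same renaming gives $f_1(a_{i_t})=f_2(b_{j_t})$ and $f_1(a_{j_t})=f_2(b_{i_t})$, so some touched game-edge carrying positive player-$1$ weight yields, with positive probability, a failing continuation $(a_{i_t},b_{j_t})$ or $(a_{j_t},b_{i_t})$ that re-hits an already-used pair and keeps $S$ matched. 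Either way we obtain a $\pi$-reachable failing matched stage of length $r+1$. Taking $r=\lceil m/2\rceil-1=(m-1)/2$ shows $\pi$ cannot guarantee coordination in fewer than $\lceil m/2\rceil$ rounds; with Proposition~\ref{the: GCT with RES} this gives $\GCT$-optimality of $\LA$.

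\textbf{Uniqueness (step (ii)).} By Definition~\ref{def: RES}, $\LA$ forbids exactly the choices that admit a completion making the stage automorphism-equivalent (Definition~\ref{def: similarity}) to the current one; in $\CM_m$ these are precisely the choices lying in an already-touched edge. Hence $\LA$ reaches only matched stages, at each non-final one it plays (forced by structurality) the uniform distribution on the untouched block, and at the last stage it plays the point masses on the unique untouched—hence resolved—edge. Now let $\pi'$ be $\GCT$-optimal. If at some $\LA$-reachable stage $S$ of length $r\le(m-3)/2$ the protocol $\pi'$ placed positive weight on an $\LA$-forbidden choice, that choice re-touches an edge, so the resulting failing stage is still matched with $\ge m-2r\ge 3$ untouched edges; running the induction of step (i) from it yields a $\pi'$-reachable failing matched stage of length $(m-1)/2$ with $\ge 3$ untouched edges and therefore no resolved edge, so $\pi'$ cannot coordinate in round $\lceil m/2\rceil$, contradicting $\GCT$-optimality. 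Hence $\pi'$ is supported on the $\LA$-allowed choices at each $\LA$-reachable stage, and structurality forces it to be the uniform distribution there; at the final stage it must put point masses on the resolved edge. Thus $\pi'$ coincides with $\LA$ on all $\LA$-reachable stages, which is unique $\GCT$-optimality.

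\textbf{Main obstacle.} The routine parts are the Cauchy–Schwarz lemma and the base case; the delicate part is the ``both players play inside the touched block'' branch of the inductive step, where a protocol could try to build a resolved edge faster than $\LA$ by re-hitting a \emph{shared} vertex (e.g.\ the history $(a_1,b_2),(a_1,b_3)$ already resolves edge $1$). The crux is that the structurality constraints couple the two players' distributions across the matched pairs so tightly that, however the protocol allocates weight inside the touched block, some \emph{whole} matched pair is still re-hit with positive probability, so no new shared vertex need be created. Verifying this coupling and the associated automorphism-equivalence claims in every configuration—i.e.\ keeping precise track of the renaming group of the stage through each round—is the technical heart of the proof, and is the ``optimally fast elimination of automorphisms'' referred to above.
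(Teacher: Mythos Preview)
Your proposal is correct and rests on the same core mechanism as the paper's proof: in a stage whose history consists entirely of pairs of structurally equivalent choices (your ``matched'' stages), the player-swap renaming forces any structural protocol to admit, with positive probability, a continuation that again produces such a pair, and no focal point (your ``resolved edge'') exists while at least two untouched edges remain. The paper packages this as a single contradiction argument---assume some $\pi\neq\LA$ has $\GCT\le\lceil m/2\rceil$, locate the \emph{first} round where $\pi$ plays a touched edge, replace the opponent's move by its structurally equivalent partner, and then follow a ``structurally equivalent'' play to exhibit a run of length $\ge\lceil m/2\rceil+1$---whereas you separate the lower bound from uniqueness and run an explicit induction on the round number, treating the $u>0$ and $u=0$ branches separately.

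What your organisation buys is a more self-contained argument: the Cauchy--Schwarz characterisation of $\OSCP=1$ makes precise what the paper leaves implicit (that structurally equivalent choices block sure coordination), and your induction handles the possibility that a protocol mixes touched and untouched weight in an arbitrary round rather than only at the ``first deviation''. What the paper's organisation buys is brevity: by jumping straight to the first deviation it avoids setting up the resolved-edge/matched-stage vocabulary and the case split, at the cost of leaving the reachability of the adversarial play (your $u=0$ branch) somewhat informal. Substantively the two proofs are the same; the obstacle you flag---ruling out that re-hitting a shared vertex could create a focal point earlier---is exactly the point the paper handles by observing that the structurally equivalent partner of a touched choice lies on the \emph{same} previously used pair, so the adversarial play never creates a shared vertex.
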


\begin{proof}
Let $m$ be odd. 
Recall that, by Proposition~\ref{the: GCT with RES}, the \GCT in $\CM_m$ with \LA
is $\lceil m/2\rceil$ rounds.
We assume, for contradiction, that there is
some protocol $\pi\neq\LA$ that
guarantees coordination in $\CM_m$ in at
most $\lceil m/2\rceil$ of rounds, possibly less.
As $\pi\neq\LA$, there exists some play of $\CM_m$ where both
players follow $\pi$, and in some round, at
least one of the players chooses a node on a touched edge.
(Recall from the proof of
Proposition~\ref{the: GCT with RES} that \LA never
chooses from a touched edge in \CM-game with an odd number of edges.)
Now, let $S_\ell=(\CM_m,\mathcal{H}_\ell)$
be the first stage of that play 
when this happens---so if $(c,c')$ is the most
recently recorded pair of choices in $S_\ell$, then at 
least one of $c$ and $c'$ is part of an
edge that has already been touched in some earlier round.
And furthermore, in all 
stages $S_{\ell'}$ with $\ell'< \ell$, the most 
recently chosen pair does not contain a choice belonging to an
edge that was touched in some yet earlier round $\ell''<\ell'$.

In the stage $S_{\ell -1}$ it therefore holds that
for every choice profile $(c_i,d_i)$, chosen in some round $i\leq (\ell - 1)$,
the nodes $c_i$ and $d_i$ are structurally equivalent.
Of course also the nodes of $S_{\ell - 1}$ on so far
untouched edges are structurally 
equivalent to each other. Furthermore, the number of
already touched edges in $S_{\ell - 1}$ is the
\emph{even} number $m' = 2(\ell - 1)$.

We will now show that $\pi$ does \emph{not} guarantee a 
win in $\lceil m/2 \rceil - (\ell - 1)$ rounds when starting 
from the stage $S_{\ell - 1}$. This
completes the proof, contradicting the assumption that $\pi$
guarantees a win in $\CM_m$ in at most $\lceil m/2 \rceil$ rounds.

%
Now, recall the stage $S_\ell$ from above where $(c,c')$
contained a choice from an already touched edge. By symmetry, we
may assume that $c$ is such a choice.
Starting from the stage $S_{\ell - 1}$, consider a newly
defined stage $S_\ell'$ where the 
first player again makes the choice $c$ but
the other player this time \emph{makes a structurally equivalent choice $c^*\sim c$}.
This is possible as $\pi$ is a structural protocol. Now note that
the choice profile $(c,c^*)$ is not winning since $c$ and $c^*$ are structurally 
equivalent choices from already touched edges, and thus either $(c,c^*)$
is a choice profile that has already been chosen in some
earlier round $j < \ell$, or the nodes $d,d^*$
adjacent in $\CM_m$ to $c^*,c$ (respectively) form a choice profile $(d,d^*)$
chosen in some earlier round $j < \ell$.
Therefore, in the freshly defined stage $S_\ell'$, the players have in every stage
(including the stage $S_\ell'$ itself) selected a choice profile that consists of two
structurally equivalent choices. Both choices in the most recently selected
choice profile in $S_\ell'$ have been picked from edges that have become 
touched even earlier. It now suffices to
show that it can still take $\lceil m/2 \rceil - (\ell - 1)$ rounds to
finish the game. To see that this is the case, 
we shall next consider a play from the stage $S_\ell'$
onwards where in each remaining round, the choice profile $(e,e^*)$
picked by the players consists of structurally equivalent choices; such a play
exists since $\pi$ is structural.
Due to picking only structurally 
equivalent choices in the remaining play,
when choosing a profile from the already touched 
part, the players will clearly never coordinate. 
And when choosing from the untouched part, immediate coordination is
guaranteed if and only if there is only one untouched edge left.
Therefore the players coordinate exactly when 
they ultimately select from the last untouched edge. As the stage $S_{\ell}'$
has precisely $m - 2(\ell - 1)$ untouched edges, winning in this 
play takes at least
%
%
$\big\lceil\ \frac{m - 2(\ell - 1)}{2}\, \big\rceil\ 
=\ \lceil m/2 \rceil - (\ell - 1)$
rounds to win from $S_\ell'$.
\end{proof}

\vcut{
If we assume that rational principles should be structural (as the authors argue
in \cite{lori2017}), then winning is not guaranteed in any number of rounds in $LG_m$ by rational reasoning (in particular, by following RES) when $m$ is even. However, even though coordination is not guaranteed for such games (by following structural principles), players can still naturally optimize their expected coordination time in these games. This is studied in the next section.
}

\vcut{
Lastly we note that there are alternative (rational) ways to reason that $G(m(1\times 1))$ for odd $m$ is solvable (we present these informally):
\begin{itemize}
\item Assumption of ``identical reasoners''. \dots
\item Reduction of bad symmetries. \dots
\end{itemize}
}


\section{Optimizing Expected Coordination Times}\label{optimizingsection}

In this section we investigate which protocols give the best {\ECT}s 
for two-player choice matching games. 
We also 
investigate when the best \ECT is obtained by a unique protocol.
We already know by Propositions~\ref{the: LG2} and \ref{the: LG3} that the
optimal {\ECT}s for $\CM_2$ and $\CM_3$
are uniquely given by \WM and \LA, respectively. 
Thus it remains to consider the games $\CM_m$ with $m\geq 4$.
We first cover
the case $m\geq 6$ and show that then \WM is the unique protocol
giving the best \ECT. 
The remaining special cases $m=4$ and $m=5$ will then be examined.
The following auxiliary lemma (proven in \cite{arxivopt222}) will be
used in the proofs.

\smallskip

\begin{lemma}\label{32 Lemma}
The \ECT from $(\CM_m,\mathcal{H}_k)$ with no focal point 
is at least $\frac{3}{2}$ with any protocol.
\end{lemma}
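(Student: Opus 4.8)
The plan is to reduce the statement to the claim that the one-shot coordination probability from the given stage is at most $\tfrac12$. Indeed, let $p$ be this probability for a stage $S=(\CM_m,\mathcal{H}_k)$ under a protocol $\pi$: since every play takes at least one round and, conditioned on failing in the first round, at least one further round, the expected number of rounds from $S$ under $\pi$ is at least $1\cdot p+2\cdot(1-p)=2-p$; hence it suffices to prove $p\le\tfrac12$ whenever $S$ has no focal point. Write $C_1,C_2$ for the two choice sets of $\CM_m$ and let $\phi\colon C_1\to C_2$ be the bijection matching each choice with its partner along the winning relation of $\CM_m$ (well defined, since the $m$ winning profiles are pairwise disjoint and cover all choices). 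With $f_1=\pi(S,1)$ and $f_2=\pi(S,2)$ we have $p=\sum_{c\in C_1}f_1(c)\,f_2(\phi(c))$. Structurality of $\pi$ yields $f_i(c)=f_i(c')$ whenever $c\sim c'$ with $c,c'\in C_i$, and $f_1(c)=f_2(e)$ whenever $c\in C_1$, $e\in C_2$ and $c\sim e$.

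If $f_1(c)\le\tfrac12$ for every $c\in C_1$ we are done at once: $p\le\tfrac12\sum_{c\in C_1}f_2(\phi(c))=\tfrac12$. So assume $f_1(c^{*})=\mu>\tfrac12$ for some $c^{*}$. Then $[c^{*}]\cap C_1=\{c^{*}\}$, since a second choice of player $1$ in $[c^{*}]$ would also carry probability $\mu$, exceeding the total mass $1$. As $c^{*}$ is not a focal point, $[c^{*}]$ must contain some $e\in C_2$ with $e\ne\phi(c^{*})$; by structurality $f_2(e)=\mu$, and then $[c^{*}]\cap C_2=\{e\}$ as well, so $[c^{*}]=\{c^{*},e\}$ is a two-element class straddling the two players but \emph{not} contained in an edge. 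Consequently every value of $f_1$ other than $\mu$, and every value of $f_2$ other than $\mu$, is at most $s:=1-\mu<\tfrac12$, and $\sum_{c\ne c^{*}}f_1(c)=\sum_{d\ne e}f_2(d)=s$.

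Now I bound $p$ in this case. The indices $c^{*}$ and $\phi^{-1}(e)$ are distinct (because $\phi(c^{*})\ne e$). Writing $\alpha=f_1(\phi^{-1}(e))\le s$ and $\beta=f_2(\phi(c^{*}))\le s$, the term for $c=c^{*}$ is $\mu\beta$, the term for $c=\phi^{-1}(e)$ is $\mu\alpha$, and for every remaining $c$ (that is, $c\in C_1$ other than $c^{*}$ and $\phi^{-1}(e)$) we have $\phi(c)\notin\{e,\phi(c^{*})\}$, so $f_2(\phi(c))\le s-\beta$, while the total $f_1$-mass of these remaining choices equals $s-\alpha$; thus $p\le\mu\beta+\mu\alpha+(s-\alpha)(s-\beta)$. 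As a function of $(\alpha,\beta)\in[0,s]^2$ this expression is increasing in each variable (its partial derivatives, $\mu-(s-\beta)$ and $\mu-(s-\alpha)$, are positive since $\mu>\tfrac12>s$), so it is maximised at $\alpha=\beta=s$, where it equals $2\mu s=2s(1-s)$. Since $s\mapsto 2s(1-s)$ is increasing on $[0,\tfrac12]$ with value $\tfrac12$ at $s=\tfrac12$, and $s<\tfrac12$ here, we get $p<\tfrac12$. In either case $p\le\tfrac12$, so the expected coordination time from $S$ with $\pi$ is at least $2-p\ge\tfrac32$.

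The main obstacle is the case $f_1(c^{*})>\tfrac12$: one has to first extract the rigid structure of such a ``heavy'' class — that it is forced to be exactly a pair $\{c^{*},e\}$ of the special shape above — and then keep the bookkeeping tight, since the crude estimates give only $p\le 1-\mu^{2}$, which exceeds $\tfrac12$ for $\mu$ slightly above $\tfrac12$. The improvement comes from also subtracting the ``off-class'' masses $\alpha$ and $\beta$ when bounding the remaining terms, which brings the maximum down to $2s(1-s)<\tfrac12$.
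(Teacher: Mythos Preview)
The paper does not actually contain a proof of this lemma; it defers entirely to the preprint \cite{arxivopt222}. So there is nothing in the paper to compare against, and your argument must stand on its own --- which it does. The reduction to showing that the one-shot coordination probability is at most $\tfrac12$ is the natural move, and both cases are handled cleanly. In particular, the delicate Case~B (a choice with mass $\mu>\tfrac12$) is correct: the no-focal-point hypothesis forces $[c^{*}]=\{c^{*},e\}$ with $e\ne\phi(c^{*})$, and your bookkeeping with $\alpha,\beta$ yields $p\le 2\mu s=2s(1-s)<\tfrac12$.

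One simplification you overlooked but do not need: the renaming $(\beta,h)$ witnessing $c^{*}\sim e$ necessarily carries the edge through $c^{*}$ to the edge through $e$ (renamings preserve $W_G$), so $h(\phi(c^{*}))=\phi^{-1}(e)$ and hence $\phi(c^{*})\sim\phi^{-1}(e)$. Structurality then gives $\alpha=\beta$ for free, collapsing your two-variable maximisation to one variable. This is purely cosmetic; your bound is already tight enough.
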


\smallskip

We are now ready to cover the case for choice matching games $\CM_m$ with $m\geq 6$.

\smallskip

\begin{theorem}\label{the: 6-ladders}
\WM is uniquely \ECT-optimal for each $\CM_m$ with $m\geq 6$.
\end{theorem}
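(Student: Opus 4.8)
The plan is to show that, for $\CM_m$ with $m \geq 6$, the protocol $\WM$ achieves $\ECT = 3 - \frac{2}{m}$ (by the remark following Theorem~\ref{jormatheorem}), and that no protocol can do strictly better, and that any protocol achieving the optimum must agree with $\WM$ on all reachable stages. The overall strategy is an analysis of the decision tree of an arbitrary optimal protocol $\pi$, rooted at the initial stage, tracking the evolution of the automorphism structure (in particular, the set of touched edges and the structural equivalence classes) round by round, and comparing the contribution of each branch to $\WM$'s value.

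First I would set up the first round. With $m$ indistinguishable choices, any structural $\pi$ must randomize uniformly in round~1, so the $\OSCP$ is exactly $\frac{1}{m}$; with probability $\frac{1}{m}$ the game ends, and with probability $1 - \frac{1}{m}$ we reach (up to structural similarity, via Lemma~\ref{similarity lemma}) the unique stage $S_1$ with exactly two touched edges and no focal point. So $\ECT(\pi) = 1 + (1 - \frac{1}{m}) \cdot \ECT_\pi(S_1)$, and since $\WM$ gives $\ECT(\WM) = 3 - \frac{2}{m} = 1 + (1-\frac1m)\cdot 2$, it suffices to prove $\ECT_\pi(S_1) \geq 2$ for every protocol $\pi$, with equality forcing $\pi$ to behave like $\WM$ from $S_1$ onward. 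Now at $S_1$ there is no focal point (the two touched edges are structurally interchangeable, and when $m \geq 6$ there remain $m - 2 \geq 4$ untouched edges, also mutually equivalent), so Lemma~\ref{32 Lemma} already gives $\ECT_\pi(S_1) \geq \frac{3}{2}$ — but we need the sharper bound $2$. The key point is that from $S_1$ the best "safe" move is the $\WM$ move (alternate between the two choices on the two touched edges that can coordinate with the opponent's touched choices), giving $\OSCP = \frac{1}{2}$ and returning to a stage structurally similar to $S_1$ (two touched edges, no focal point), whereas any move that touches a fresh edge can only create a stage with three or more touched edges and — crucially, for $m \geq 6$ — still no focal point, since $m - 3 \geq 3 \geq 2$ untouched edges remain. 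So I would prove by a careful case analysis / induction on the number of untouched edges that from any non-final stage of $\CM_m$ with no focal point the $\ECT$ with any protocol is at least $2$, exploiting that a focal point can never appear in $\CM_m$ until at most one untouched edge is left, that coordination-by-repetition is impossible among structurally equivalent touched choices, and that coordination from the untouched part in a single round requires exactly one untouched edge remaining.

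The main obstacle — and the place that demands the most care — is the uniqueness claim: showing not merely $\ECT_\pi(S_1) \geq 2$ but that equality pins down $\pi$ to $\WM$ on every reachable stage. For this I would argue that to attain $\ECT = 2$ from $S_1$ the protocol must in every round put all its probability mass on choices lying on the already-touched edges (any mass on a fresh edge strictly raises the expected time, because it leads to a no-focal-point stage with strictly fewer "wasted" future-safe moves available and no compensating gain — this is where $m \geq 6$ is used, to guarantee the fresh stage is still focal-point-free so no shortcut is created), and moreover must split that mass uniformly between the two "swap" choices so that the $\OSCP$ is exactly $\frac12$; any asymmetric split either is non-structural or lowers the one-shot probability below $\frac12$, again raising the expectation. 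One has to be careful that "agrees with $\WM$" is only required on stages actually reachable under $\pi$, and to handle the bookkeeping that after the first coordinating-type move the relevant substage is again structurally $\CM_2$-like. I would also explicitly flag the boundary: the argument breaks for $m = 4$ (where after reaching three touched edges only one untouched edge is left, creating focal points and hence a genuine shortcut — this is exactly why $\CM_4$ is the exceptional case with a continuum of optimal protocols) and for $m = 5$ (where $\LA$ competes), which is why the theorem is stated for $m \geq 6$ and the cases $m = 4, 5$ are deferred to Theorems~\ref{the: 5-ladders} and \ref{the: 4-ladders}.
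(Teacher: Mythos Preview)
Your reduction to showing $\ECT_\pi(S_1) \geq 2$ (with equality only for $\WM$) from the two-touched-edge stage $S_1$ is exactly how the paper sets things up. However, your proposed method for establishing this bound contains a genuine error.

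You claim that ``any move that touches a fresh edge can only create a stage with three or more touched edges and --- crucially, for $m \geq 6$ --- still no focal point,'' and later that ``a focal point can never appear in $\CM_m$ until at most one untouched edge is left.'' Both claims are false. If in round $k+1$ one player selects a choice on a touched edge while the other selects a choice on an untouched edge, the resulting stage has three touched edges but \emph{does} contain focal points: the asymmetry in the players' histories breaks the structural equivalence (compare Example~\ref{ex: focal points}), and the players can coordinate with certainty in the very next round. A structural protocol that puts mass on both touched and untouched choices will realise this asymmetric branch with positive probability. Your proposed induction on the number of untouched edges therefore cannot go through: deviating from $\WM$ does not uniformly lead to focal-point-free continuation stages with $\ECT \geq 2$; the asymmetric branch leads to a stage with $\ECT = 1$.

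The paper handles this by writing down a three-case formula (E) for $\ECT_\pi(S_1)$ as a function of the probability $p$ placed on touched edges: the both-touched case contributes $p^2\bigl(\frac{1}{2} + \frac{1}{2}(1+E_1)\bigr)$, the asymmetric case contributes $2p(1-p)\cdot 2$ (one failed round plus one guaranteed-win round via the created focal point), and the both-untouched case contributes $(1-p)^2\bigl(\frac{1}{n} + \frac{n-1}{n}(1+E_2)\bigr)$ with $n=m-2$. Using the bounds $E_1 \leq 2$ (Lemma~\ref{roskalemma}) and $E_2 \geq \frac{3}{2}$ (Lemma~\ref{32 Lemma}), one shows analytically that for $n \geq 4$ the expression is uniquely minimised at $p = 1$, forcing $\WM$. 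The point is that placing mass on fresh edges is suboptimal not because it never creates focal points, but because the both-untouched branch is costly enough that the benefit of the asymmetric focal-point branch cannot compensate --- and this balance is exactly what formula (E) captures and what your argument misses.
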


\begin{proof}
We first present a formula for estimating the best {\ECT}s in stages of $\CM_m$ (with any $m\geq 1$).
%
%
Let $S := (\CM_m,\mathcal{H}_k)$ be a non-final stage
with exactly two touched edges.
Thus there are $n:=m-2$ untouched edges.
Suppose the players use a protocol $\pi$ behaving as follows in
round $k+1$. Both players pick a choice from
some touched edge with probability $p$ and
from an untouched edge with probability $(1-p)$. A uniform
distribution is used on choices in 
both classes: probability $\frac{p}{2}$ for both
choices on touched edges (which makes
sense by Lemma B.2 of \cite{arxivopt222}) and probability $\frac{1-p}{n}$ for each
choice on untouched edges (which is necessary with a 
structural protocol). If one player selects a choice $c$ from a touched
edge and the other one a choice $c'$ from an untouched
edge, the players win in the next round by choosing the 
edge with~$c'$.
%
%
Note that $c'$ is a focal point, so the winning edge can be
chosen by a structural protocol with probability~1.
(Also other focal points arise which could
alternatively
be used; cf. Example~\ref{ex: focal points}):


Suppose then that $E_1$ is the \ECT with $\pi$ from a stage $(\CM_m,\mathcal{H}_{k+1})$
where both players have chosen a touched
edge in round $k+1$ but failed to coordinate.
Two different such stages $(\CM_m,\mathcal{H}_{k+1})$ exist, but they
are automorphism-equivalent, so $\pi$ can give the same \ECT 
from both of them by Lemma~\ref{similarity lemma}. (Indeed, if $\pi$ gave
two different $\ECT$s, it would make sense to adjust it to give the
smaller one.)
Similarly, suppose $E_2$ is the \ECT with $\pi$ from a
stage $(\CM_m,\mathcal{H}_{k+1}')$ where
both players have chosen an untouched edge in round $k+1$ but failed to coordinate.
Note that all possible such stages $(\CM_m,\mathcal{H}_{k+1}')$
are renamings of each other, so $\pi$
gives the same \ECT from each one.
We next establish that the expected
coordination time from $(\CM_m,\mathcal{H}_k)$ with $\pi$ is now
given by the following formula 
(called formula (E) below):
\begin{align*}
	p^2\Bigl(\frac{1}{2}+\frac{1}{2}\bigl(1+E_1\bigr)\Bigr)\ &+\ 2p(1-p)\cdot 2\ 
	+\ (1-p)^2\, \Bigl(\dfrac{1}{n} + \dfrac{n-1}{n}\Bigl(1 + E_2\Bigr)\Bigr) \tag{E}
\end{align*}

Indeed, both players choose a touched edge in round $k+1$ with probability $p^2$. 
In that case the \ECT from $(\CM_m,\mathcal{H}_{k})$ is 
$\frac{1}{2}+\frac{1}{2}(1+E_1)$,
the first occurrence of $\frac{1}{2}$ corresponding to direct
coordination and the remaining term covering the case where
coordination fails at first.
Both players choose an untouched edge in round $k+1$ with
probability $(1-p)^2$, and then
the \ECT from $(\CM_m,\mathcal{H}_k)$ is
$\frac{1}{n} + \frac{n-1}{n}(1 + E_2).$
\vcut{
Here $\frac{1}{n}$ gives the contribution resulting from direct
coordination and the remaining term covers the case if
coordination fails at first. 
}
The remaining term $2p(1-p)\cdot 2$ is the
contribution of the case
where one player chooses a touched edge and the other player an
untouched one. The probability for this is $2p(1-p)$, and
the remaining factor $2$ indicates that coordination 
immediately happens in the subsequent round $k+2$ using the
focal point created in round $k+1$.

We then present an informal argument sketch for the case $\CM_m$ with $m\geq 6$. We may assume that $E_1\leq 2$ and $E_2\geq\frac{3}{2}$ by Lemmas \ref{roskalemma} and \ref{32 Lemma}.
%
%
Figure \ref{fig: fig} below illustrates the graph of (E) 
with $E_1 = 2$, $E_2 = \frac{3}{2}$, $n=4$, so then (E) has a unique
minimum at $p=1$ when $p\in[0,1]$. This
suggests that---under these parameter values---the players should
always choose a touched edge in
stages with exactly two touched edges. Clearly, lowering $E_1$,
raising $E_2$ or raising $n$ should make it even more beneficial to
choose a touched edge. As we indeed can assume that $E_1\leq 2$
and $E_2\geq\frac{3}{2}$ in $\CM_m$ for $m\geq 6$, this informally 
justifies 
that $\WM$ is uniquely \ECT-optimal in $\CM_m$.
%


%
Next we formalize the argument above.
Let $S := (\CM_m,\mathcal{H}_k)$, $m\geq 6$, be a non-final stage with
precisely two touched edges and $S'$ a stage extending $S$ by one
round where the players both choose an untouched edge but fail to coordinate.
Let $r_1$ (respectively, $r_2$) be the infimum of all
possible {\ECT}s from $S$ (respectively, $S'$) with different protocols.
Note that by Lemma \ref{similarity lemma}, $r_1$ and $r_2$ are independent of
which particular representative stages we
choose, as long as the stages satisfy the given constraints.
Let $\epsilon > 0$ and fix some numbers $E_1$ and $E_2$
such that $|E_1 - r_1 | < \epsilon$ and $|E_2 - r_2 | < \epsilon$. We
assume $E_1\leq 2$
and $E_2\geq\frac{3}{2}$ by Lemmas \ref{roskalemma} and \ref{32 Lemma}.
It is easy to show that with such $E_1$
and $E_2$, the minimum value of the formula (E)
with $p\in [0,1]$ is obtained at $p = 1$ (for any $n = m-2 \geq 4$).

Thus, after the necessarily random choice in round one, the above reasoning shows
the players should choose a
touched edge with probability $p=1$ in
each round. Indeed, assume that the
earliest occasion that a
protocol $\pi_k$ assigns $p\not=1$
occurs in round $k$.
Then, as shown above, the \ECT of $\pi_k$
can be strictly
improved by letting $p=1$ in that round. By Lemma 
B.2 of \cite{arxivopt222}, a uniform probability over the
touched choices should be used. 
\end{proof}

\begin{figure}
\begin{center}
\begin{tikzpicture}[domain=0:1,xscale=5,yscale=10,scale=0.7]
 	\draw[ultra thin,color=gray] (0,1.9) grid[xstep=0.1,ystep=0.05] (1,2.2);
	\draw[->] (0,1.9) -- (1.05,1.9) node[right] {\scriptsize $p$};
	\draw[->] (0,1.9) -- (0,2.23) node[above] {\scriptsize E$(p)$};
	\node at (0,1.87) {\tiny $0$};
	\node at (0.5,1.87) {\tiny $0.5$};
	\node at (1,1.87) {\tiny $1$};
	\node at (-0.07,1.9) {\tiny $1.9$};
	\node at (-0.07,2) {\tiny $2$};
	\node at (-0.07,2.1) {\tiny $2.1$};
	\node at (-0.07,2.2) {\tiny $2.2$};
	\draw[thick] plot (\x, {\x^2*(1/2+(1/2)*(1+2)) + 4*\x*(1-\x) + (1-\x)^2*(1/4+(3/4)*(1+(3/2)))});
\end{tikzpicture}
\;
\begin{tikzpicture}[domain=0:1,xscale=5,yscale=10,scale=0.7]
 	\draw[ultra thin,color=gray] (0,1.6) grid[xstep=0.1,ystep=0.05] (1,1.9);
	\draw[->] (0,1.6) -- (1.05,1.6) node[right] {\scriptsize $p$};
	\draw[->] (0,1.6) -- (0,1.93) node[above] {\scriptsize E$(p)$};
	\node at (0,1.57) {\tiny $0$};
	\node at (0.5,1.57) {\tiny $0.5$};
	\node at (1,1.57) {\tiny $1$};
	\node at (-0.07,1.6) {\tiny $1.6$};
	\node at (-0.07,1.7) {\tiny $1.7$};
	\node at (-0.07,1.8) {\tiny $1.8$};
	\node at (-0.07,1.9) {\tiny $1.9$};
	\draw[thick] plot (\x, {\x^2*(1/2+(1/2)*(1+(3/2))) + 4*\x*(1-\x) + (1-\x)^2*(1/3+(2/3)*(1+1))});
\end{tikzpicture}
\;
\begin{tikzpicture}[domain=0:1,xscale=5,yscale=10,scale=0.7]
 	\draw[ultra thin,color=gray] (0,1.9) grid[xstep=0.1,ystep=0.05] (1,2.1);
	\draw[->] (0,1.9) -- (1.05,1.9) node[right] {\scriptsize $p$};
	\draw[->] (0,1.9) -- (0,2.13) node[above] {\scriptsize E$(p)$};
	\node at (0,1.87) {\tiny $0$};
	\node at (0.5,1.87) {\tiny $0.5$};
	\node at (1,1.87) {\tiny $1$};
	\node at (-0.07,1.9) {\tiny $1.9$};
	\node at (-0.07,2) {\tiny $2$};
	\node at (-0.07,2.1) {\tiny $2.1$};
	\draw[thick] plot (\x, {\x^2*(1/2 + (1/2)*(1+2)) + 4*\x*(1-\x) + (1-\x)^2*(1/2+(1/2)*(1+2))});
\end{tikzpicture}%
\caption{Graph of (E) with\ \ (i)~$n=4$, $E_1=2$, $E_2=\frac{3}{2}$;\ \ \ 
(ii)~$n=3$, $E_1=\frac{3}{2}$, $E_2=1$;\ \ \ 
(iii)~$n=2$, $E_1=E_2=2$.}
\label{fig: fig}
\end{center}
\end{figure}
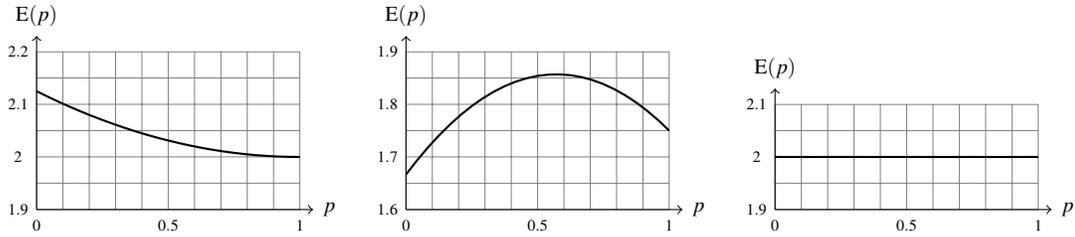%
%
%

We then cover the case for $\CM_5$. The argument is similar to 
the case for $\CM_m$ with $m\geq 6$, 
but this time leads to the use of $\LA$ instead of $\WM$.

\begin{theorem}\label{the: 5-ladders}
For $\CM_5$, \LA is uniquely \ECT-optimal.
\end{theorem}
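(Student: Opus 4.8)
The plan is to follow the proof of Theorem~\ref{the: 6-ladders} almost verbatim, re-using formula~(E) from there, but now with the parameters forced by $\CM_5$, which will move the minimum of~(E) from $p=1$ to $p=0$ --- that is, from ``choose a touched edge'' to ``choose an untouched edge'', which is exactly the behaviour of \LA. Fix a non-final stage $S=(\CM_5,\mathcal{H}_k)$ with exactly two touched edges, so there are $n=3$ untouched edges, and let $S'$ denote a one-round extension of $S$ in which both players pick (necessarily distinct) untouched edges but fail. First I would record the two facts that feed into (E). (i)~\emph{$S$ has no focal point.} Up to a renaming, $S$ is automorphism-equivalent to the canonical stage with history $\big((a_1,b_2)\big)$, whose structural equivalence classes are $\{a_1,b_2\}$, $\{b_1,a_2\}$ and the set of the six untouched choices; every choice is thus structurally equivalent to a choice lying on a \emph{different} edge, so by Lemma~\ref{32 Lemma} every protocol has \ECT at least $\tfrac32$ from $S$, i.e.\ $r_1\ge\tfrac32$, where $r_1$ is the infimum of the {\ECT}s from $S$. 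The same applies to any stage obtained from $S$ by one further round in which both players pick touched edges and fail: such a round either repeats an earlier choice profile or swaps the two previously picked nodes, and in both cases the resulting stage is again automorphism-equivalent to the canonical two-touched-edge stage. (ii)~\emph{From $S'$ coordination is guaranteed next round.} The stage $S'$ has exactly one untouched edge, and its two nodes form a pair of focal points (any renaming preserves ``lying on a touched edge'', so these two nodes can only be mapped to each other); hence $r_2=1$, where $r_2$ is the infimum of the {\ECT}s from $S'$.

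Next I would run the $\epsilon$-infimum argument of Theorem~\ref{the: 6-ladders}: for $\epsilon>0$ choose a protocol $\pi$ of the form underlying~(E) (uniform on touched choices, justified by Lemma~B.2 of \cite{arxivopt222}) whose {\ECT}s $E_1$ from $S$ and $E_2$ from $S'$ satisfy $|E_1-r_1|<\epsilon$ and $|E_2-r_2|<\epsilon$; then $E_1>\tfrac32-\epsilon$ and $1\le E_2<1+\epsilon$. With $n=3$, formula~(E) becomes the quadratic $E(p)=(A-4+C)p^2+(4-2C)p+C$ on $p\in[0,1]$, where $A=1+\tfrac{E_1}{2}$ and $C=1+\tfrac23E_2$. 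A one-line computation gives $E(p)-E(0)=p\big[(A-4+C)p+(4-2C)\big]$, and the bracket, being linear in $p$, equals $4-2C>0$ at $p=0$ (as $E_2<1+\epsilon$) and equals $A-C=\tfrac{E_1}{2}-\tfrac23E_2>0$ at $p=1$ (as $E_1\ge\tfrac32-\epsilon$ and $E_2\le1+\epsilon$), hence is positive throughout $[0,1]$. Thus $E(p)>E(0)$ for all $p\in(0,1]$: formula~(E) is \emph{strictly} minimized at $p=0$ for every admissible pair $E_1,E_2$ once $\epsilon$ is small enough.

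Consequently, exactly as in Theorem~\ref{the: 6-ladders}, at every reachable two-touched-edge stage an \ECT-optimal protocol must pick an untouched edge --- the first round in which it did otherwise could be strictly improved --- with uniform probability on the three untouched choices by Lemma~B.2 of \cite{arxivopt222} and structurality. Assembling the protocol: the first round is forced to be uniform on all five choices by structurality; after a round-one failure we are at a two-touched-edge stage, where the above forces uniform play on the untouched edges; after a second failure exactly one untouched edge remains and, being a focal-point pair by~(ii), it is chosen with probability $1$ in the third round. This behaviour is precisely that of \LA (cf.\ Example~\ref{ex: use of LA}), and it gives \ECT $\tfrac15\cdot1+\tfrac45\big(1+\tfrac53\big)=\tfrac73=2+\tfrac13$ in $\CM_5$. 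Uniqueness in the required sense follows because at each stage reachable by \LA the optimal distribution was shown to be \emph{strictly} optimal, hence must be the one used by any \ECT-optimal protocol.

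The main obstacle, exactly as for $m\ge6$, is the bookkeeping rather than the algebra: justifying via Lemma~B.2 of \cite{arxivopt222} that it suffices to consider protocols of the form underlying~(E), handling the self-reference of $E_1$ through the $\epsilon$-infimum device, and checking that the $p=0$ minimum is stable over the whole admissible parameter range --- which here, pleasantly, needs only the lower bound $E_1\ge\tfrac32$ and the exact value $E_2=1$, and no upper bound on $E_1$.
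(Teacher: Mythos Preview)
Your proposal is correct and follows essentially the same approach as the paper's proof: re-use formula~(E) from Theorem~\ref{the: 6-ladders} with $n=3$, set $E_2=1$ via the unique focal-point pair on the last untouched edge, bound $E_1\geq\tfrac32$ via Lemma~\ref{32 Lemma}, and conclude that (E) is strictly minimised at $p=0$. You supply more detail than the paper does (the explicit quadratic computation, the explicit \ECT value $\tfrac73$, and the uniqueness reasoning), but the structure and the key ideas are identical.
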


\begin{proof}
Recall the formula (E) from the proof of Theorem~\ref{the: 6-ladders}.
Let $S := (\CM_5,\mathcal{H}_k)$ be a non-final stage with
precisely two touched edges and $S'$ a stage extending $S$ by one
round where the players both choose an untouched edge but fail to coordinate. The \ECT-optimal protocol from $S'$
chooses the unique winning pair of focal points in round $k+2$, so we
now have $E_2 = 1$.
Let $r_1$ be the infimum of all
possible {\ECT}s from $S$ with different protocols.
Let $\epsilon > 0$ and fix some real number $E_1$
such that $|E_1 - r_1 | < \epsilon$,
assuming $E_1\geq\frac{3}{2}$ (cf. Lemma \ref{32 Lemma}).
It is straightforward to show that with these values,
and with $n=3$, the minimum of (E)
when $p\in [0,1]$ is obtained at $p = 0$. (See
also Figure~\ref{fig: fig} for the graph of (E) when $E_1=\frac{3}{2}$ 
for an illustration. Even then the figure suggests to choose an
untouched edge.)

Thus, after the necessarily random choice in round one, the above reasoning
shows that the players should choose an
untouched edge with probability $1$ in
the second round, thereby following $\LA$.
Coordination is guaranteed (latest) in the third round.
%
\end{proof}


In the last case $m=4$, \WM is \ECT-optimal, but not uniquely, as there
exist infinitely many other \ECT-optimal protocols. 
The reason for this is that---as shown in Figure \ref{fig: fig}---the graph of (E)
becomes the constant line with the value $2$ in the special case where $E_1=E_2=2$, and then any $p\in[0,1]$
gives the optimal value for (E). 
%
%
A complete proof is given in \cite{arxivopt222}.


\begin{theorem}\label{the: 4-ladders}
\WM is \ECT-optimal for $\CM_4$, but there are continuum many other
protocols that are also \ECT-optimal. 
\end{theorem}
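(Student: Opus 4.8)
The plan is to reduce everything to a single non-final stage of $\CM_4$ with exactly two touched edges. In the first round all four choices are structurally equivalent, so every structural protocol must play uniformly: the players coordinate with probability $\frac14$ and otherwise reach a stage $S$ with two touched and two untouched edges. All such stages are renamings of one another, so by Lemma~\ref{similarity lemma} they have a common least \ECT, call it $r_1$, and the optimal \ECT in $\CM_4$ equals $1+\frac34 r_1$. Hence it suffices to prove $r_1=2$ and to describe all \ECT-optimal moves from $S$.

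The first step is to classify the stages reachable from $S$ after one more failed round. If both players pick touched edges and fail, the structural equivalence relation is unchanged, so the resulting stage is automorphism-equivalent to $S$ (Def.~\ref{def: similarity}) and again has least \ECT $r_1$. If both pick untouched edges and fail, all four edges become touched, producing a stage $S'$ that is unique up to renaming, with least \ECT $r_2$. If one player picks a touched edge and the other an untouched one, the recorded history destroys every remaining automorphism, so every choice becomes a focal point and coordination follows in the next round, giving least \ECT $1$. Substituting $E_1=r_1$ and $E_2=r_2$ into formula~(E) from the proof of Theorem~\ref{the: 6-ladders} with $n=2$ — and using Lemma~B.2 of \cite{arxivopt222} together with structurality to force the uniform distribution within the touched class and within the untouched class, so that the only free parameter is the probability $p$ of picking a touched edge — the least \ECT obtainable from $S$ becomes the minimum over $p\in[0,1]$ of formula~(E), which is a self-referential equation for $r_1$ in terms of $r_2$.

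The second step is to pin down $r_2$ by the same method applied to $S'$: from $S'$ a failed round either returns an automorphism-equivalent copy of $S'$ (least \ECT $r_2$) or a stage all of whose choices are focal points (least \ECT $1$), and structurality forces the round-move to be balanced within each of the two edge-pairs, leaving one free parameter. Minimising the corresponding expected value yields a self-referential equation whose only solution is $r_2=2$. Feeding $r_2=2$ back into formula~(E) simplifies it to $2+p^2\bigl(\frac{r_1}{2}-1\bigr)$, so the equation $r_1=\min_{p\in[0,1]}\bigl(2+p^2(\frac{r_1}{2}-1)\bigr)$ forces $r_1=2$ as well; this is consistent with the upper bound $r_1\le 2$ given by \WM via Lemma~\ref{roskalemma} and the lower bound $\frac32$ of Lemma~\ref{32 Lemma}.

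Finally, with $r_1=r_2=2$ (equivalently $E_1=E_2=2$), formula~(E) collapses to the constant $2$, so \emph{every} $p\in[0,1]$ gives an \ECT-optimal move at $S$; combining any such choice with the optimal continuations (fixing the winning edge at focal-point stages by any convention) produces, for each $p$, a structural protocol that is \ECT-optimal for $\CM_4$, with \ECT $1+\frac34\cdot 2=\frac52$, exactly the \ECT of \WM in $\CM_4$. The value $p=1$ is precisely \WM, while the remaining continuum of values $p\in[0,1)$ yields continuum many pairwise distinct \ECT-optimal protocols other than \WM, as claimed. I expect the main obstacle to be the stage bookkeeping: proving that exactly the ``mixed'' failing profiles destroy all remaining symmetry (so a focal point appears and the continuation \ECT drops to $1$), while the ``aligned'' profiles preserve the structural equivalence relation, and then making the two interlocking fixed-point computations rigorous — in particular invoking Lemma~B.2 of \cite{arxivopt222} to restrict attention to the one-parameter families of moves captured by formula~(E).
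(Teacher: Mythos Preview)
Your approach is essentially the paper's: reduce to formula~(E) with $n=2$, show $E_1=E_2=2$, observe the formula is then constant~$2$, and conclude that every $p\in[0,1]$ is optimal. Your treatment of $S'$ (all four edges touched) supplies the detail that the paper defers to \cite{arxivopt222}, and the two nested fixed-point equations $r_2=2$, $r_1=2$ are computed correctly.

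There is one genuine slip. At $S'$ you write that ``structurality forces the round-move to be balanced within each of the two edge-pairs.'' It does not. The renaming group of $S'$ (history $(a_1,b_2),(c_1,d_2)$) has order~$2$: the identity and the player-swap $a_1\leftrightarrow b_2$, $b_1\leftrightarrow a_2$, $c_1\leftrightarrow d_2$, $d_1\leftrightarrow c_2$. No renaming with $\beta=\mathrm{id}$ survives except the identity, so structurality places \emph{no} constraint among player~1's probabilities $q_a,q_b,q_c,q_d$; it only ties player~2's distribution to player~1's. The balance $q_a=q_b$, $q_c=q_d$ that you need comes from \emph{optimality} (exactly as at $S$, via Lemma~B.2 of \cite{arxivopt222}, or directly: for fixed $q_a+q_b$ the coordination probability $2q_aq_b$ is maximised at $q_a=q_b$). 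Once you invoke Lemma~B.2 here as well --- just as you correctly did at $S$ --- your one-parameter reduction and the fixed-point computation $r_2=2$ go through, and the rest of the argument is fine.
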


We have thus given a \emph{complete analysis} of
optimal {\ECT}s and  {\GCT}s in
two-player \CM-games, summarized in Figure~\ref{fig: summary table}.
Appendix D of \cite{arxivopt222} contains
reflections on the results. We note that the cases for $\CM_m$ with small $m$ are exceptionally important from the point of
view of applications, as such cases tend to occur more 
frequently in real-life scenarios.

\section{The Most Difficult Two-Player \WLC-Games}\label{sec: characterisations}

In this section we give a complete characterization of the upper 
bounds of optimal {\ECT}s in \WLC-games as a
function of game size. 
For any $m\geq 1$, an \defstyle{$m$-choice game}
refers to any 
two-player \WLC-game $G=(A,C_1,C_2,W_G)$ where $m=\max\{|C_1|,|C_2|\}$. 
Note that with the classical matrix representation of an $m$-choice game, the parameter $m$ corresponds to the largest dimension of the matrix.
In this section we will also show 
that $\CM_m$ can be seen as the
\emph{uniquely most difficult} $m$-choice game for all $m\neq 3$, see Theorem~\ref{hardest games}.

Our first theorem shows that the wait-or-move protocol is reasonably ``safe'' to use in 
any $m$-choice game with $m\not\in\{3,5\}$ as it
always guarantees an \ECT which is at most equal
to the upper bound of optimal {\ECT}s of
all $m$-choice games for the particular $m$.

\begin{theorem}\label{safety of WM}
Let $m\not\in\{1,3,5\}$ and consider an
$m$-choice game $G=(A,C_1,C_2,W_G)\not= \CM_m$.
Then the \ECT in $G$ with \WM is strictly smaller than the
optimal \ECT in $\CM_m$.
\end{theorem}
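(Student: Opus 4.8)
The plan is to reduce the statement, via Theorem~\ref{jormatheorem}, to a one‑line counting inequality about bipartite graphs. First I would pin down the target value. By the remark following Theorem~\ref{jormatheorem}, \WM achieves in $\CM_m$ an \ECT of exactly $3-\frac{2}{m}$, and by Proposition~\ref{the: LG2} (case $m=2$), Theorem~\ref{the: 4-ladders} (case $m=4$) and Theorem~\ref{the: 6-ladders} (case $m\geq 6$) this value is \ECT-optimal; hence for every $m\notin\{1,3,5\}$ the optimal \ECT in $\CM_m$ equals $3-\frac{2}{m}$. Writing $p$ for the one-shot coordination probability of $G$ when both players choose their first move uniformly at random, it therefore suffices to prove $p>\frac{1}{m}$: Theorem~\ref{jormatheorem} then gives that the \ECT in $G$ with \WM is at most $3-2p<3-\frac{2}{m}$, the optimal \ECT in $\CM_m$.

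The main step is the estimate $p>\frac1m$. Put $a=|C_1|$, $b=|C_2|$ and $w=|W_G|$, so $m=\max\{a,b\}$, $p=\frac{w}{ab}$, and $p>\frac1m$ is equivalent to $wm>ab$. Since $G$ has no surely losing choices, both coordinate projections of $W_G$ (onto $C_1$ and onto $C_2$) are surjective, so $w\geq a$ and $w\geq b$, hence $w\geq m$. If $\min\{a,b\}<m$, then $ab\leq m(m-1)<m^2\leq wm$, and we are done. Otherwise $a=b=m$; suppose toward a contradiction that $w=m$. Then the $m$ edges cover all $m$ vertices of $C_1$ with none left over, so each vertex of $C_1$ lies on exactly one edge, i.e.\ $W_G$ is the graph of a function $C_1\to C_2$; this function is surjective (all of $C_2$ is covered) and hence bijective by cardinality, so $G$ is a perfect matching, i.e.\ up to renaming it is $\CM_m$, contradicting $G\neq\CM_m$. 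Therefore $w\geq m+1$, whence $wm\geq m(m+1)>m^2=ab$. In every case $p>\frac1m$, and the theorem follows.

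I do not expect a genuine obstacle here; the delicate points are minor. One is that ``$G\neq\CM_m$'' must be read up to renaming, so that the perfect-matching case can be discarded (otherwise the statement would be false, a renamed copy of $\CM_m$ being a counterexample). The other is verifying that the cited optimality results jointly pin down the optimal \ECT of $\CM_m$ as exactly $3-\frac2m$ on precisely the index set $m\notin\{1,3,5\}$ --- which is also what forces those three exclusions: for $m=1$ there is no $m$-choice game $G\neq\CM_1$ at all, while for $m\in\{3,5\}$ the optimal \ECT of $\CM_m$ is strictly below $3-\frac2m$ (namely $5/3$ and $7/3$), so the crude bound $3-2p$ from Theorem~\ref{jormatheorem} need not be sharp enough to conclude.
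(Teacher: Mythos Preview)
Your proposal is correct and follows essentially the same route as the paper: cite the optimality results to identify the target value $3-\tfrac{2}{m}$, invoke Theorem~\ref{jormatheorem}, and verify $p>\tfrac{1}{m}$ by a short counting argument on $|W_G|$ versus $|C_1|\cdot|C_2|$. Your version is in fact slightly more explicit, spelling out the perfect-matching argument behind the paper's one-line implication that $|W_G|=m$ together with $G\neq\CM_m$ forces $\min\{|C_1|,|C_2|\}<m$.
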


\begin{proof}
By Theorems \ref{the: 6-ladders}, \ref{the: 4-ladders} and Proposition~\ref{the: LG2},
the optimal \ECT in $\CM_m$ is given by \WM. 
We saw in Section \ref{sect:solvability}
that the {\ECT} with \WM is $3-\frac{2}{m}$ in $\CM_m$ 
and at most $3-2p$ in $G$, where $p$ is
the one-shot coordination probability when choosing randomly in $G$.  
Since $G$ is an $m$-choice game, $|W_G|\geq m$. 
If $|W_G|>m$, then $p > \frac{m}{m^2} = \frac{1}{m}$.
And if $|W_G|=m$, we have $p = \frac{m}{mn} = \frac{1}{n} > \frac{1}{m}$
where $n:=\min\{|C_1|,|C_2|\}<m$ since $G\neq\CM_m$.
In both cases, we have $3-2p < 3-\frac{2}{m}$.
\end{proof}


The \defstyle{greatest optimal \ECT} among a class $\mathcal{G}$ of \WLC-games is the value $r$ such that (1) $r$ is the optimal \ECT for some $G\in\mathcal{G}$; and (2) for every $G\in\mathcal{G}$, there is a protocol
which gives it an $\ECT\leq r$. 
By Theorem~\ref{safety of WM}, the greatest optimal \ECT among $m$-choice
games is given by \WM in $\CM_m$ for $m\not\in\{1,3,5\}$.
%
%
The case $m=1$ is trivial, but for the special cases $m=3$ and $m=5$, we need to perform a systematic graph-theoretic analysis of all such $m$-choice games and their {\ECT}s to identify the greatest optimal \ECT among the class. This analysis is done in Appendix~C of~\cite{arxivopt222}, but we sketch below the main ideas starting from the case $m=3$.

Since the optimal \ECT for all $3$-choice games whose graph has a node with degree $3$ is trivially~$1$ round, we may leave them out of the analysis. All the remaining $3$-choice games (up to structural equivalence) are pictured in Figure~\ref{fig: 3-choice games}. Except for $\CM_3$ and the last two games on the right, all of these games have a focal point and thus their optimal \ECT is 1. The optimal \ECT for the second game from the right is $1+\frac{1}{2}$ which is obtained by the protocol making a uniformly random choice every round. The optimal \ECT of the rightmost game is

\begin{center}
$\frac{1}{2}\left(1+\sqrt{4+\sqrt{17}}\right)$
\end{center}

\noindent 
which is indeed the greatest optimal \ECT among all $3$-choice games. (Moreover, this \ECT is obtained by a protocol using highly nontrivial probability distributions for selecting choices.) See Appendix~C.1 of \cite{arxivopt222} for the full details in
each case.

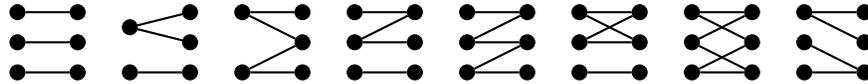
\begin{figure}[htp]
\begin{center}
\begin{tikzpicture}[scale=0.4,choice/.style={draw, circle, fill=black!100, inner sep=2pt}]
	\node at (0,0) [choice] (00) {};
	\node at (2,0) [choice] (20) {};
	\node at (0,1) [choice] (01) {};
	\node at (2,1) [choice] (21) {};
	\node at (0,2) [choice] (02) {};
	\node at (2,2) [choice] (22) {};
	\draw[thick] (00) to (20);
	\draw[thick] (01) to (21);
	\draw[thick] (02) to (22);
\end{tikzpicture}
\quad
\begin{tikzpicture}[scale=0.4,choice/.style={draw, circle, fill=black!100, inner sep=2pt}]
	\node at (0,0) [choice] (00) {};
	\node at (2,0) [choice] (20) {};
	\node at (0,1.5) [choice] (01) {};
	\node at (2,1) [choice] (21) {};
	\node at (2,2) [choice] (22) {};
	\draw[thick] (00) to (20);
	\draw[thick] (01) to (21);
	\draw[thick] (01) to (22);
\end{tikzpicture}
\quad
\begin{tikzpicture}[scale=0.4,choice/.style={draw, circle, fill=black!100, inner sep=2pt}]
	\node at (0,0) [choice] (00) {};
	\node at (2,0) [choice] (20) {};
	\node at (2,1) [choice] (21) {};
	\node at (0,2) [choice] (02) {};
	\node at (2,2) [choice] (22) {};
	\draw[thick] (00) to (20);
	\draw[thick] (00) to (21);
	\draw[thick] (02) to (21);
	\draw[thick] (02) to (22);
\end{tikzpicture}
\quad
\begin{tikzpicture}[scale=0.4,choice/.style={draw, circle, fill=black!100, inner sep=2pt}]
	\node at (0,0) [choice] (00) {};
	\node at (2,0) [choice] (20) {};
	\node at (0,1) [choice] (01) {};
	\node at (2,1) [choice] (21) {};
	\node at (0,2) [choice] (02) {};
	\node at (2,2) [choice] (22) {};
	\draw[thick] (00) to (20);
	\draw[thick] (01) to (21);
	\draw[thick] (01) to (22);
	\draw[thick] (02) to (22);
\end{tikzpicture}
\quad
\begin{tikzpicture}[scale=0.4,choice/.style={draw, circle, fill=black!100, inner sep=2pt}]
	\node at (0,0) [choice] (00) {};
	\node at (2,0) [choice] (20) {};
	\node at (0,1) [choice] (01) {};
	\node at (2,1) [choice] (21) {};
	\node at (0,2) [choice] (02) {};
	\node at (2,2) [choice] (22) {};
	\draw[thick] (00) to (20);
	\draw[thick] (00) to (21);
	\draw[thick] (01) to (21);
	\draw[thick] (01) to (22);
	\draw[thick] (02) to (22);
\end{tikzpicture}
\quad
\begin{tikzpicture}[scale=0.4,choice/.style={draw, circle, fill=black!100, inner sep=2pt}]
	\node at (0,0) [choice] (00) {};
	\node at (2,0) [choice] (20) {};
	\node at (0,1) [choice] (01) {};
	\node at (2,1) [choice] (21) {};
	\node at (0,2) [choice] (02) {};
	\node at (2,2) [choice] (22) {};
	\draw[thick] (00) to (20);
	\draw[thick] (01) to (21);
	\draw[thick] (01) to (22);
	\draw[thick] (02) to (21);
	\draw[thick] (02) to (22);
\end{tikzpicture}
\quad
\begin{tikzpicture}[scale=0.4,choice/.style={draw, circle, fill=black!100, inner sep=2pt}]
	\node at (0,0) [choice] (00) {};
	\node at (2,0) [choice] (20) {};
	\node at (0,1) [choice] (01) {};
	\node at (2,1) [choice] (21) {};
	\node at (0,2) [choice] (02) {};
	\node at (2,2) [choice] (22) {};
	\draw[thick] (00) to (20);
	\draw[thick] (00) to (21);
	\draw[thick] (01) to (20);
	\draw[thick] (01) to (22);
	\draw[thick] (02) to (21);
	\draw[thick] (02) to (22);
\end{tikzpicture}
\quad
\begin{tikzpicture}[scale=0.4,choice/.style={draw, circle, fill=black!100, inner sep=2pt}]
	\node at (0,0) [choice] (00) {};
	\node at (2,0) [choice] (20) {};
	\node at (0,1) [choice] (01) {};
	\node at (2,1) [choice] (21) {};
	\node at (0,2) [choice] (02) {};
	\node at (2,2) [choice] (22) {};
	\draw[thick] (00) to (20);
	\draw[thick] (01) to (20);
	\draw[thick] (02) to (21);
	\draw[thick] (02) to (22);
\end{tikzpicture}
\caption{Game graphs of all (nontrivial) $3$-choice games. The game on the right has the greatest optimal \ECT.}
\label{fig: 3-choice games}
\end{center}
\end{figure}

A similar---yet much more complex---analysis can be done for the case $m=5$ to show that $\CM_5$ has the greatest optimal \ECT ($2+\frac{1}{3}$ rounds) among all $5$-choice games.
(See Appendix~C.2 of \cite{arxivopt222} for the full details of the analysis sketched here.)
We first observe that if $|W_G|>8$, then the \ECT obtained by following \WM in $G$ is $2+\frac{7}{25} < 2+\frac{1}{3}$. 
Next we analyze games $G$ with $|W_G|\leq 8$ based on the degrees of choices. If at least one of the players has a choice of degree at least $4$, we can prove that either there exists a focal point in $G$ or it has the special form where two choices have degree $4$ and all the others have degree~$1$. In this special case we can obtain the \ECT of $2$ rounds by giving the probability $\frac{1}{2}$ for the choices with degree $4$ in the first round and following \WM thereafter.

Suppose next that the degree of all choices in $G$ is at most $3$. If at least one of the players has a choice with degree $3$, we can use graph-theoretic reasoning to show that there is a way to immediately guarantee coordination---unless both players have \emph{exactly one} choice of degree~$3$, denoted by $c$ and $c'$. In this special case, if there is an edge between $c$ and $c'$, they are focal points. Else, $G$ has to be one of the following six games (up to structural equivalence).

\begin{center}
\begin{tikzpicture}[scale=0.4,choice/.style={draw, circle, fill=black!100, inner sep=2pt}]
%
%
	\node at (0,0) [choice] (00) {};
	\node at (2,0) [choice] (20) {};
	\node at (0,1) [choice] (01) {};
	\node at (2,1) [choice] (21) {};
	\node at (0,2) [choice] (02) {};
	\node at (2,2) [choice] (22) {};
	\node at (0,3) [choice] (03) {};
	\node at (2,3) [choice] (23) {};
	\node at (0,4) [choice] (04) {};
	\node at (2,4) [choice] (24) {};
	\draw[thick] (00) to (21);
	\draw[thick] (01) to (21);
	\draw[thick] (02) to (21);
	\draw[thick] (03) to (22);
	\draw[thick] (03) to (23);
	\draw[thick] (03) to (24);
	\draw[thick] (00) to (20);
	\draw[thick] (04) to (24);
\end{tikzpicture}
\qquad
\begin{tikzpicture}[scale=0.4,choice/.style={draw, circle, fill=black!100, inner sep=2pt}]
	\node at (0,0) [choice] (00) {};
	\node at (0,1) [choice] (01) {};
	\node at (2,1) [choice] (21) {};
	\node at (0,2) [choice] (02) {};
	\node at (2,2) [choice] (22) {};
	\node at (0,3) [choice] (03) {};
	\node at (2,3) [choice] (23) {};
	\node at (0,4) [choice] (04) {};
	\node at (2,4) [choice] (24) {};
	\draw[thick] (00) to (21);
	\draw[thick] (01) to (21);
	\draw[thick] (02) to (21);
	\draw[thick] (03) to (22);
	\draw[thick] (03) to (23);
	\draw[thick] (03) to (24);
	\draw[thick] (04) to (24);
\end{tikzpicture}
\qquad
\begin{tikzpicture}[scale=0.4,choice/.style={draw, circle, fill=black!100, inner sep=2pt}]
	\node at (0,0) [choice] (00) {};
	\node at (0,1) [choice] (01) {};
	\node at (2,1) [choice] (21) {};
	\node at (0,2) [choice] (02) {};
	\node at (2,2) [choice] (22) {};
	\node at (0,3) [choice] (03) {};
	\node at (2,3) [choice] (23) {};
	\node at (0,4) [choice] (04) {};
	\node at (2,4) [choice] (24) {};
	\draw[thick] (00) to (21);
	\draw[thick] (01) to (21);
	\draw[thick] (02) to (21);
	\draw[thick] (03) to (22);
	\draw[thick] (03) to (23);
	\draw[thick] (03) to (24);
	\draw[thick] (02) to (22);
	\draw[thick] (04) to (24);
\end{tikzpicture}
\qquad
\begin{tikzpicture}[scale=0.4,choice/.style={draw, circle, fill=black!100, inner sep=2pt}]
	\node at (0,0) [choice] (00) {};
	\node at (0,1) [choice] (01) {};
	\node at (2,1) [choice] (21) {};
	\node at (0,2) [choice] (02) {};
	\node at (2,2) [choice] (22) {};
	\node at (0,3) [choice] (03) {};
	\node at (2,3) [choice] (23) {};
	\node at (2,4) [choice] (24) {};
    \node at (0,5) [choice] (05) {};
    \node at (2,5) [choice] (25) {};
	\draw[thick] (00) to (21);
	\draw[thick] (01) to (21);
	\draw[thick] (02) to (21);
	\draw[thick] (03) to (22);
	\draw[thick] (03) to (23);
	\draw[thick] (03) to (24);
    \draw[thick] (05) to (25);
\end{tikzpicture}
\qquad
\begin{tikzpicture}[scale=0.4,choice/.style={draw, circle, fill=black!100, inner sep=2pt}]
	\node at (0,0) [choice] (00) {};
	\node at (0,1) [choice] (01) {};
	\node at (2,1) [choice] (21) {};
	\node at (0,2) [choice] (02) {};
	\node at (2,2) [choice] (22) {};
	\node at (0,3) [choice] (03) {};
	\node at (2,3) [choice] (23) {};
	\node at (2,4) [choice] (24) {};
    \node at (0,5) [choice] (05) {};
    \node at (2,5) [choice] (25) {};
	\draw[thick] (00) to (21);
	\draw[thick] (01) to (21);
	\draw[thick] (02) to (21);
	\draw[thick] (03) to (22);
	\draw[thick] (03) to (23);
	\draw[thick] (03) to (24);
	\draw[thick] (02) to (22);
    \draw[thick] (05) to (25);
\end{tikzpicture}
\qquad
\begin{tikzpicture}[scale=0.4,choice/.style={draw, circle, fill=black!100, inner sep=2pt}]
	\node at (0,0) [choice] (00) {};
	\node at (0,1) [choice] (01) {};
	\node at (2,1) [choice] (21) {};
	\node at (0,2) [choice] (02) {};
	\node at (2,2) [choice] (22) {};
	\node at (0,3) [choice] (03) {};
	\node at (2,3) [choice] (23) {};
	\node at (2,4) [choice] (24) {};
    \node at (0,5) [choice] (05) {};
    \node at (2,5) [choice] (25) {};
	\draw[thick] (00) to (21);
	\draw[thick] (01) to (21);
	\draw[thick] (02) to (21);
	\draw[thick] (03) to (22);
	\draw[thick] (03) to (23);
	\draw[thick] (03) to (24);
	\draw[thick] (05) to (24);
    \draw[thick] (05) to (25);
\end{tikzpicture}
\end{center}

Among these games, the leftmost one does not have a focal point unlike all the others, but for that game we can formulate a protocol which gives the \ECT of $2$ rounds.
Finally, supposing that the degrees of all choices in $G$ are at most $2$, the graph of $G$ must consist of \emph{disjoint paths and cycles}. Recalling the assumption $|W_G|\leq 8$, there is a total of 28 such games (listed systematically in Appendix~C.2 of \cite{arxivopt222}). Among those, only the following four do not have a focal point.

\begin{center}
\begin{tikzpicture}[scale=0.4,choice/.style={draw, circle, fill=black!100, inner sep=2pt}]
	\node at (0,0) [choice] (00) {};
	\node at (2,0) [choice] (20) {};
	\node at (0,1) [choice] (01) {};
	\node at (2,1) [choice] (21) {};
	\node at (0,2) [choice] (02) {};
	\node at (2,2) [choice] (22) {};
	\node at (0,3) [choice] (03) {};
	\node at (2,3) [choice] (23) {};
	\node at (0,4) [choice] (04) {};
	\node at (2,4) [choice] (24) {};
	\draw[thick] (04) to (24);
	\draw[thick] (03) to (23);
	\draw[thick] (02) to (22);
	\draw[thick] (01) to (21);
	\draw[thick] (00) to (20);
\end{tikzpicture}
\qquad
\begin{tikzpicture}[scale=0.4,choice/.style={draw, circle, fill=black!100, inner sep=2pt}]
	\node at (0,0) [choice] (00) {};
	\node at (2,0) [choice] (20) {};
	\node at (0,1) [choice] (01) {};
	\node at (2,1) [choice] (21) {};
	\node at (0,2) [choice] (02) {};
	\node at (2,2) [choice] (22) {};
	\node at (0,3) [choice] (03) {};
	\node at (2,3) [choice] (23) {};
	\node at (0,4) [choice] (04) {};
	\node at (2,4) [choice] (24) {};
	\draw[thick] (04) to (24);
	\draw[thick] (04) to (23);
	\draw[thick] (03) to (22);
	\draw[thick] (02) to (22);
	\draw[thick] (01) to (21);
	\draw[thick] (00) to (20);
\end{tikzpicture}
\qquad
\begin{tikzpicture}[scale=0.4,choice/.style={draw, circle, fill=black!100, inner sep=2pt}]
	\node at (0,0) [choice] (00) {};
	\node at (2,0) [choice] (20) {};
	\node at (0,1) [choice] (01) {};
	\node at (2,1) [choice] (21) {};
	\node at (0,2) [choice] (02) {};
	\node at (2,2) [choice] (22) {};
	\node at (0,3) [choice] (03) {};
	\node at (2,3) [choice] (23) {};
	\node at (0,4) [choice] (04) {};
	\node at (2,4) [choice] (24) {};
	\draw[thick] (04) to (24);
	\draw[thick] (04) to (23);
	\draw[thick] (03) to (24);
	\draw[thick] (03) to (22);
	\draw[thick] (02) to (23);
	\draw[thick] (02) to (22);
	\draw[thick] (01) to (21);
	\draw[thick] (00) to (20);
\end{tikzpicture}
\qquad
\begin{tikzpicture}[scale=0.4,choice/.style={draw, circle, fill=black!100, inner sep=2pt}]
	\node at (0,0) [choice] (00) {};
	\node at (2,0) [choice] (20) {};
	\node at (0,1) [choice] (01) {};
	\node at (2,1) [choice] (21) {};
	\node at (0,2) [choice] (02) {};
	\node at (2,2) [choice] (22) {};
	\node at (0,3) [choice] (03) {};
	\node at (2,3) [choice] (23) {};
	\node at (0,4) [choice] (04) {};
	\node at (2,4) [choice] (24) {};
	\draw[thick] (01) to (20);
	\draw[thick] (01) to (21);
	\draw[thick] (02) to (21);
	\draw[thick] (03) to (22);
	\draw[thick] (03) to (23);
	\draw[thick] (04) to (23);
	\draw[thick] (00) to (20);
	\draw[thick] (04) to (24);
\end{tikzpicture}
\end{center}

However, for the last three games above, we can formulate a protocol which gives them an \ECT of $2$ rounds. Hence $\CM_5$ indeed has the greatest optimal \ECT among all $5$-choice games. 
%

We argue that these graph-theoretic analyses are interesting for their own sake and demonstrate the  usefulness of representing two-player \WLC-games as bipartite graphs.
%
By Theorem~\ref{safety of WM} and the analyses for the 3 and 5-choice cases, we obtain the following results.

\begin{theorem}\label{the: upper bounds for ECTs}
For any $m$, the greatest optimal \ECT among $m$-choice games is given below:

\begin{center}
\scalebox{0.75}{
\begin{tabular}{|c|c|c|c|}
\hline Game size
& $m\in\mathbb{Z}_+\setminus\{3,5\}$ & \quad$m=5$\quad\text{} & $m=3$ \\
\hline \rule{0pt}{1.2\normalbaselineskip} 
Greatest optimal\; \ECT & $3-\frac{2}{m}$ & $2+\frac{1}{3}$ & $\frac{1+\sqrt{4+\sqrt{17}}}{2}$ $(\approx 1.925)$ \\[1mm]
\hline
\end{tabular}
}
\end{center}

\medskip

\end{theorem}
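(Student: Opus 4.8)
The plan is to treat three regimes of $m$ separately, in each case exhibiting an $m$-choice game that attains the stated value and checking that no $m$-choice game forces a strictly larger optimal \ECT, so that both clauses in the definition of ``greatest optimal \ECT'' are met. For $m\notin\{1,3,5\}$ almost all the work is already done: by Proposition~\ref{the: LG2} and Theorems~\ref{the: 6-ladders} and~\ref{the: 4-ladders} the optimal \ECT in $\CM_m$ equals the \ECT given by $\WM$, which by the computation following Theorem~\ref{jormatheorem} is exactly $3-\tfrac2m$, witnessing clause~(1); and by Theorem~\ref{safety of WM} every $m$-choice game $G\neq\CM_m$ has $\WM$ as a protocol with \ECT strictly below $3-\tfrac2m$, while $\CM_m$ attains $3-\tfrac2m$ with $\WM$, so clause~(2) holds as well. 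The case $m=1$ is trivial, the unique nontrivial game having \ECT $1=3-\tfrac21$.

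For $m=5$ I would first pin down the value at $\CM_5$: Theorem~\ref{the: 5-ladders} gives that $\LA$ is \ECT-optimal there, and evaluating $E_m=\sum_\ell\ell\cdot P_{\ell,m}$ at $m=5$ yields $2+\tfrac13$. It then remains to show every other $5$-choice game $G$ admits a protocol with $\ECT\le 2+\tfrac13$, which I would do by the case analysis outlined in the text. If $|W_G|>8$ then the random one-shot probability is $p\ge\tfrac9{25}$, so $\WM$ gives $\ECT\le 3-2p\le 2+\tfrac7{25}<2+\tfrac13$. If $|W_G|\le 8$, split on the largest degree of a choice: a degree-$\ge 4$ choice either creates a focal point ($\ECT=1$) or forces $G$ into the shape with two degree-$4$ choices and the rest of degree~$1$, where weighting the two degree-$4$ choices by $\tfrac12$ in round one and then running $\WM$ gives $\ECT=2$; if all degrees are $\le 3$ with some degree-$3$ choice, a short graph argument yields immediate coordination unless both players have exactly one degree-$3$ choice, leaving six explicit graphs of which five have a focal point and the sixth admits an explicit $\ECT=2$ protocol; and if all degrees are $\le 2$, then $G$ is a disjoint union of paths and cycles, a list of $28$ graphs, all but four with a focal point and the remaining three (the fourth being $\CM_5$ itself) with an explicit $\ECT=2$ protocol. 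Since $2<2+\tfrac13$ in every branch, $\CM_5$ is the unique maximiser.

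For $m=3$ the same strategy applies but the extremum is irrational. Enumerate the nontrivial $3$-choice games up to structural equivalence (any with a degree-$3$ node has $\ECT=1$ and may be dropped), giving the list in Figure~\ref{fig: 3-choice games}. Every game there other than $\CM_3$ and the last two on the right either has a focal point or is handled directly, all with $\ECT\le\tfrac32$; $\CM_3$ has optimal \ECT $1+\tfrac23$; the second game from the right has optimal \ECT $\tfrac32$, attained by the uniformly random protocol. For the rightmost game I would write the \ECT equation for an arbitrary structural protocol: after the forced random first round the reachable stages collapse, modulo structural and automorphism-equivalence, to a small system parametrised only by how the protocol splits its probability over the residual symmetry classes, and minimising the resulting rational function over that parameter reduces to the positive root $\tfrac12\bigl(1+\sqrt{4+\sqrt{17}}\bigr)$. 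Comparing all values on the list shows this is the greatest optimal \ECT among $3$-choice games, completing the proof. The main obstacle is precisely this exact optimisation: one must correctly determine the small set of stages reachable after quotienting by structural and automorphism-equivalence, track which choices become focal points in each, write out the system of equations for the \ECT in terms of the probability weights a structural protocol is free to assign, and verify that the minimum over all structural protocols is exactly the nested-surd value rather than a value attainable by some simpler distribution; this bookkeeping of focal points and automorphism-equivalent stages is the delicate part, whereas the $m=5$ analysis, though lengthy, is essentially mechanical once organised by degree sequences, since every sub-case closes either through a focal point or through an explicit protocol with $\ECT\le 2$.
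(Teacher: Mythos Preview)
Your proposal is correct and follows essentially the same approach as the paper: the generic case $m\notin\{3,5\}$ is handled via Theorem~\ref{safety of WM} together with the optimality results for $\CM_m$, and the two exceptional cases are resolved by precisely the exhaustive graph-theoretic enumerations you outline (degree-based case split for $m=5$, the list in Figure~\ref{fig: 3-choice games} for $m=3$), with the rightmost $3$-choice game requiring the explicit optimisation that produces the nested surd. Your identification of that last optimisation as the delicate step is accurate; the paper defers its details to the appendix of the preprint.
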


\begin{theorem}\label{hardest games}
For $m\not= 3$, the greatest optimal \ECT among m-choice games is
uniquely realized by $\CM_m$.
\end{theorem}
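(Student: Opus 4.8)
The plan is to derive the statement from the strict comparisons already available, treating the regimes $m\notin\{1,3,5\}$, $m=1$, and $m=5$ separately. First I would record that $\CM_m$ does realise the greatest optimal \ECT: by Proposition~\ref{the: LG2} (case $m=2$), Theorem~\ref{the: 4-ladders} (case $m=4$) and Theorem~\ref{the: 6-ladders} (case $m\geq 6$), the optimal \ECT in $\CM_m$ equals $3-\frac{2}{m}$ for every $m\notin\{1,3,5\}$ (with $m=1$ giving $1=3-\frac{2}{1}$), while by Theorem~\ref{the: 5-ladders} it equals $2+\frac{1}{3}$ in $\CM_5$; by Theorem~\ref{the: upper bounds for ECTs} these are exactly the greatest optimal {\ECT}s among $m$-choice games. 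It therefore remains to prove \emph{uniqueness}, i.e.\ that every $m$-choice game $G\neq\CM_m$ (up to structural equivalence) has optimal \ECT \emph{strictly} below this value.

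For $m\notin\{1,3,5\}$ this is immediate. Let $G\neq\CM_m$ be an $m$-choice game. By Theorem~\ref{safety of WM} the \ECT in $G$ obtained with \WM is strictly smaller than the optimal \ECT in $\CM_m$; since the optimal \ECT of $G$ is the minimum of the {\ECT}s over all (structural) protocols, it is in particular at most the \ECT given by \WM, hence strictly below $3-\frac{2}{m}$, so $G$ does not realise the greatest optimal \ECT. The case $m=1$ is degenerate: up to structural equivalence there is only one $1$-choice game, namely $\CM_1$, since both players then have a single choice and the unique choice profile must be winning by the assumption of no surely losing choices, so uniqueness holds vacuously.

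The remaining case $m=5$ is the substantive one. Here I would invoke the exhaustive graph-theoretic classification of $5$-choice games sketched just before Theorem~\ref{the: upper bounds for ECTs} and carried out in full in Appendix~C.2 of~\cite{arxivopt222}, organised first by the number $|W_G|$ of edges of the bipartite game graph and then by its maximal vertex degree. The conclusion is that every $5$-choice game $G\neq\CM_5$ admits a protocol with \ECT at most $2+\frac{7}{25}<2+\frac{1}{3}$. Indeed, if $|W_G|>8$ then the first-round coordination probability with random choices satisfies $p=|W_G|/(|C_1|\cdot|C_2|)\geq\frac{9}{25}$, so by Theorem~\ref{jormatheorem} the \ECT with \WM is at most $3-2p\leq 3-\frac{18}{25}=2+\frac{7}{25}$; and if $|W_G|\leq 8$, a case split on whether some choice has degree $\geq 4$, exactly $3$, or at most $2$ shows that $G$ either has a focal point in its initial stage (giving optimal \ECT $1$) or is one of finitely many explicitly listed exceptional graphs --- certain degree-$3$ configurations, or disjoint unions of paths and cycles --- for each of which an explicit structural protocol attains \ECT exactly $2$. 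In all cases the optimal \ECT of $G$ lies strictly below $2+\frac{1}{3}$, so $\CM_5$ is the unique $5$-choice game realising the greatest optimal \ECT, and the theorem follows.

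The step I expect to be the main obstacle is precisely this $m=5$ classification: one must (a) enumerate, up to structural equivalence, all bipartite graphs with at most $8$ edges, both sides of size at most $5$ and at least one side of size exactly $5$, having no isolated vertex and different from $\CM_5$; (b) for each such graph exhibit either a focal point or an explicit fast protocol; and (c) check that every resulting upper bound on the \ECT is \emph{strictly} below $2+\frac{1}{3}$ --- in particular that the handful of focal-point-free exceptional games other than $\CM_5$ genuinely admit a structural protocol with \ECT $2$. Once this enumeration is in place, the reductions for $m\notin\{1,3,5\}$ and $m=1$ are short, and the pieces assemble into the claimed uniqueness.
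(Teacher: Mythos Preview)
Your proposal is correct and follows essentially the same route as the paper: invoke Theorem~\ref{safety of WM} for the strict inequality when $m\notin\{1,3,5\}$, note the trivial case $m=1$, and for $m=5$ appeal to the exhaustive case analysis of Appendix~C.2 of~\cite{arxivopt222} showing that every $5$-choice game other than $\CM_5$ admits a protocol with \ECT strictly below $2+\frac{1}{3}$. Your write-up is in fact more explicit than the paper, which simply states Theorem~\ref{hardest games} as a corollary of Theorem~\ref{safety of WM} together with the preceding $m=3,5$ analyses without spelling out the assembly.
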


Hence choice matching games can indeed be seen as \emph{the most difficult} two-player \WLC-games---excluding
the interesting and important
special case of 3-choice games.

\section{Conclusion}


In this paper we gave a complete analysis for two-player \CM-games with respect to both {\GCT}s and {\ECT}s, including 
uniqueness proofs for the related protocols.
We also found optimal upper bounds for optimal {\ECT}s for all two-player \WLC-games when determined according to game size only. 
Moreover, our arguments demonstrate the
usefulness of representing \WLC-games as hypergraphs.
The current paper concentrated on the two-player case as this already turned out a challenging question. A complete characterization of the $n$-player case remains. This is expected to be a highly difficult task that is likely to require sophisticated arguments.



\section*{Acknowledgements}
{We thank Valentin Goranko, Lauri Hella
and Kerkko Luosto for discussions on coordination games.
Antti Kuusisto was supported by the Academy of
Finland project \emph{Theory of computational logics}, grants 324435 and 328987. Raine R\"{o}nnholm was
supported by Jenny and Antti Wihuri Foundation.}




\bibliographystyle{eptcs}
\bibliography{GT-bibliography}


\end{document}